\def\calP{\mathcal{P}}
\def\calL{\mathcal{L}}
\newtheorem{observation}{Observation}
\renewcommand\subsubsection{\@startsection{subsubsection}{3}{\z@}%
                       {-18\p@ \@plus -4\p@ \@minus -4\p@}%
                       {0.5em \@plus 0.22em \@minus 0.1em}%
                       {\normalfont\normalsize\bfseries\boldmath}}
\begin{document}

\title{The Bichromatic Two-Center Problem on Graphs\thanks{This research was supported in part by U.S. National Science Foundation under Grant CCF-2339371.}
\thanks{A preliminary version of this paper will appear in \textit{Proceedings of the 18th Annual International Conference on Combinatorial Optimization and Applications (COCOA 2025)}.}}

\author{Qi Sun\and Jingru Zhang}
\authorrunning{Q. Sun and J. Zhang}

\institute{Cleveland State University, Cleveland, Ohio 44115, USA\\
\email{q.sun1@vikes.csuohio.edu, j.zhang40@csuohio.edu}}

\maketitle  
\begin{abstract}
    In this paper, we study the (weighted) bichromatic two-center problem on graphs. 
    The input consists of a graph $G$ of $n$ (weighted) vertices and $m$ edges, and a set $\calP$ 
    of pairs of distinct vertices, where no vertex appears in more than one pair. 
    The problem aims to find two points (i.e., centers) on $G$ by assigning vertices 
    of each pair to different centers so as to minimize 
    the maximum (weighted) distance of vertices to their assigned centers 
    (so that the graph can be bi-colored with this goal).   
    To the best of our knowledge, this problem has not been studied on graphs, including tree graphs. 
    In this paper, we propose an $O(m^2n\log n\log mn)$ algorithm 
    for solving the problem on an undirected graph provided with the distance matrix, 
    an $O(n\log n)$-time algorithm for the problem on trees, 
    and a linear-time approach for the unweighted tree version. 
\end{abstract}

\section{Introduction}
The bichromatic two-center problem is a new facility location problem, 
proposed by Arkin et al. in~\cite{ref:ArkinBi15}, with the goal to 
construct two public transport hubs to connect origin and destination sites while 
aiming to minimize the maximum transportation cost between the sites and the hubs. 
To the best of our knowledge, all previous work has focused on the 
planar unweighted case~\cite{ref:ArkinBi15,ref:WangIm22}, 
and no result is known for the problem on graphs. 
In this paper, we present the first algorithms for the (weighted) bichromatic 
two-center problem on various graphs.

Let $G = (V,E)$ be an undirected graph with $n$ vertices and $m$ edges 
where each vertex $v$ is of a weight $w(v)\geq 0$ 
and each edge $e$ is of length $l(e)>0$. In addition, the input includes 
a set of $k$ pairs of distinct vertices so that no vertex of $G$ occurs in 
more than one pair. Let $\calP = \{P_1, \cdots, P_k\}$ represent this set 
of pairs; denote by $v_i$ and $u_i$ the two vertices in each pair $P_i$. 

For any two vertex $u,v$, denote $e(u,v)$ by the edge connecting them. 
We talk about `points' on $e(u,v)$ so that a point $q$ of $e(u,v)$ is 
at distance $t(q)$ to $u$ along (segment) $e(u,v)$. So, the distance of 
two points $q_1$ and $q_2$ of $G$ is the length of their shortest path 
$\pi(q_1,q_2)$. 

For each pair $P_i\in\calP$, define $\phi_i(q_1,q_2)$ as the minimized maximum (weighted) 
distance by assigning $v_i$ and $u_i$ to different points of $q_1$ and $q_2$; 
namely, $\phi_i(q_1,q_2) $ is the minimum value of $\max\{w(v_i)d(v_i,q_1), w(u_i)d(u_i, q_2)\}$ and $\max\{w(u_i)d(u_i,q_1), w(v_i)d(v_i, q_2)\}$. 
The bichromatic two-center problem aims to find two points $q^*_1$ and $q^*_2$ on $G$ 
so as to minimize $\max_{1\leq i\leq k}\phi_i(q_1, q_2)$. 

Consider coloring vertices assigned to $q^*_1$ red, and vertices assigned to $q^*_2$ blue 
in an optimal solution. In addition, the center of vertices in a subset of $V$ is a point on $G$ 
that minimizes the maximum (weighted) distance from it to the vertices. 
Hence, this problem is indeed a graph $2$-coloring variant that aims to choose from each pair 
a vertex to be red and let the other to be blue, in such a way that minimizes the maximum 
one-center objective values respectively for red and blue vertices. 

In this paper, we propose an $O(nm^2\log n\log mn)$-time algorithm for the problem 
on an undirected graph, provided with the distance matrix. 
Moreover, an $O(n\log n)$-time algorithm is given for solving the problem on a tree, 
and an $O(n)$-time improvement is obtained for the unweighted tree case. 

\paragraph{Related work.} As motivated by a chromatic clustering problem~\cite{ref:DingSo11} arising in biology and 
transportation, Arkin et al.~\cite{ref:ArkinBi15} proposed the bichromatic two-center 
problem for pairs of points in the plane, which aims to choose from each pair 
a point to be colored red and let the other to be blue so as to minimize the maximum of 
the two radii of the minimum balls respectively enclosing red and blue points. 
They gave an linear-time algorithm for the problem in $L_\infty$ metric, and 
an $O(n^3\log n)$-time exact algorithm as well as two approximation algorithms 
for the problem in $L_2$ metric. Recently, Wang and Xue~\cite{ref:WangIm22} improved their 
exact algorithm for the $L_2$ metric to $O(n^2\log n)$ and gave a faster approximation 
algorithm. 
As far as we are aware, however, no previous result exists for the problem on graph networks. 

This bichromatic two-center problem is closely related to the one-center and two-center problems~\cite{ref:WangIm22}. These two problems have been studied on various graphs in the literature. 
Megiddo~\cite{ref:MegiddoLi83} proposed a prune-and-search algorithm for addressing 
the one-center problem on trees in linear time. By adapting Megiddo's prune-and-search scheme, 
Ben-Moshe et al.~\cite{ref:Ben-MosheAn06} solved the two-center problem on tree in linear time. 
On cactus graphs, Ben-Moshe et al.~\cite{ref:BenMosheEf07} first gave a `binary search' algorithm 
to solve the one-center problem and then extended their one-center scheme to address the two-center problem in $O(n\log^3n)$ time. 
On general graphs, Kariv and Hakimi~\cite{ref:KarivAnC79} gave the first exact algorithms for the one-center problem and the unweighted case. Later, Bhattacharya and Shi~\cite{ref:BhattacharyaIm14} improved their result to $O(m^2n\log^2 n)$ by proposing an $O(m^2n\log n)$-time feasibility test.  

To solve our problem, a geometric piercing problem needs to be addressed: Given a set of subsets of two-sided axis-parallel rectangles, the goal is to find a point in the plane that pierces (or lies in) one rectangle in each subset. 
To the best of our knowledge, no exact algorithm exists for this problem, and most exact works focus on finding $O(1)$ points to pierce $n$ geometric objects, e.g., disks and rectangles~\cite{ref:LofflerLa10,ref:CarmiSt2023,ref:SharirRe96,ref:NussbaumRe97}. 

\paragraph{Our approach.}
Let $\lambda^*$ be the optimal objective value, and denote by $q_1^*$ and $q_2^*$ the two centers. 
We observe that $\lambda^*$ is determined by the \textit{local} center of two vertices 
on an edge of $G$. Because the local center of any two vertices on an edge 
is determined by an intersection of their distance functions $w(v)d(v,x)$ w.r.t. $x$ on the edge. 
By forming the distance function for each vertex w.r.t. each edge of $G$, the 
line-arrangement searching technique can be adapted to find $\lambda^*$ with assistance 
of our feasibility test which determines for any given value $\lambda$ whether 
$\lambda\geq\lambda^*$. 

To determine whether $\lambda\geq\lambda^*$, we determine for 
every pair of edges whether each edge contains a point such that the objective 
value w.r.t. the two points is at most $\lambda$. This problem can be reduced into 
the above-mentioned geometric piercing problem. The sweeping technique is adapted to 
solve this problem in $O(n\log n)$ time with the assistance of our data structure. 

For $G$ being a tree, we perform a `binary search' on it to find the two edges that respectively 
contain $q^*_1$ and $q^*_2$ with the assistance of our key lemma: Given any point of the tree, 
we can find its \textit{hanging} subtrees (i.e., two subtrees incident to the point) 
respectively containing $q^*_1$ and $q^*_2$ in linear time. This key lemma is based on our 
linear-time feasibility test. As a result, $\lambda^*$ can be computed in $O(n\log n)$ time. 

When every vertex of the tree is of same weight, we observe that $\lambda^*$ is 
determined by the longest path(s) between two vertices assigned to $q^*_1$ or $q^*_2$, 
where one of its endpoints must be the endpoint of the longest path of the tree. 
This implies that the optimal assignment for each pair $P_i$ is determined by 
$\phi_i(\alpha,\beta)$ where the path between vertices $\alpha$ and $\beta$ 
is a longest path of the tree, which leads a linear-time algorithm.  

\paragraph{Outline.} In Section~\ref{sec:preliminary}, we introduce some observations and review a technique. We present our algorithm for the problem on general graphs in Section~\ref{sec:graph}, and our algorithms for the weighted and unweighted cases on tree graphs in Section~\ref{sec:tree}.

\section{Preliminaries}\label{sec:preliminary}
Let $V(\calP)$ be the set of vertices in all pairs of $\calP$. Note that $V(\calP)$ might be a subset of $V$. If so, we set the weight of each 
vertex in $V/V(\calP)$ as zero and then form new pairs for all vertices in $V/V(\calP)$. 
It should be mentioned that when $n$ is odd, 
before forming new pairs, we join a new vertex into $G$ by connecting it and any vertex of $G$ via an edge of a positive length. By the provided distance matrix, the distance of 
this new vertex to every other vertex can be known in $O(1)$ time.  

Let $\calP'$ denote the set of all obtained pairs including those in $\calP$, 
and $G'$ denote the new graph in the case that $n$ is odd. 
Clearly, solving the problem on $G$ w.r.t. $\calP$ is 
equivalent to solving the problem on $G'$ w.r.t. $\calP'$. The time complexity of the reduction is $O(n+m)$. Below, we thus focus on solving the case of $V(P) = V$. 

For any subset $V'$ of $V$ and any edge $e$ of $G$, 
there is a point on $e$ that minimizes the maximum (weighted) distance of it to vertices in $V'$; 
such a point is the \textit{local} center of $V'$ on $e$. The center of $V'$ on $G$ is the one with 
the smallest maximum (weighted) distance, i.e., the smallest one-center objective value of $V'$, among all its local centers. 

\begin{figure}
\begin{minipage}{0.32\linewidth}
    \includegraphics[width = 0.8\linewidth]{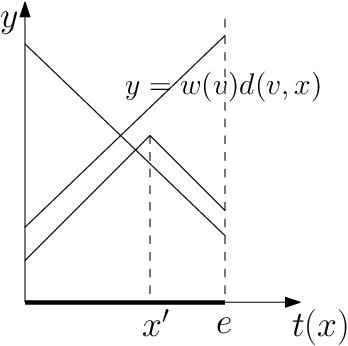}
    \caption{Illustrate the three cases of function $y = w(v)d(v,x)$ w.r.t. $x$ on edge $e$. Point $x'$ is the (unique) semicircular point of $v$ on $e$. }
    \label{fig:distance}
\end{minipage}
\begin{minipage}{0.65\linewidth}
    \includegraphics[width=0.95\linewidth]{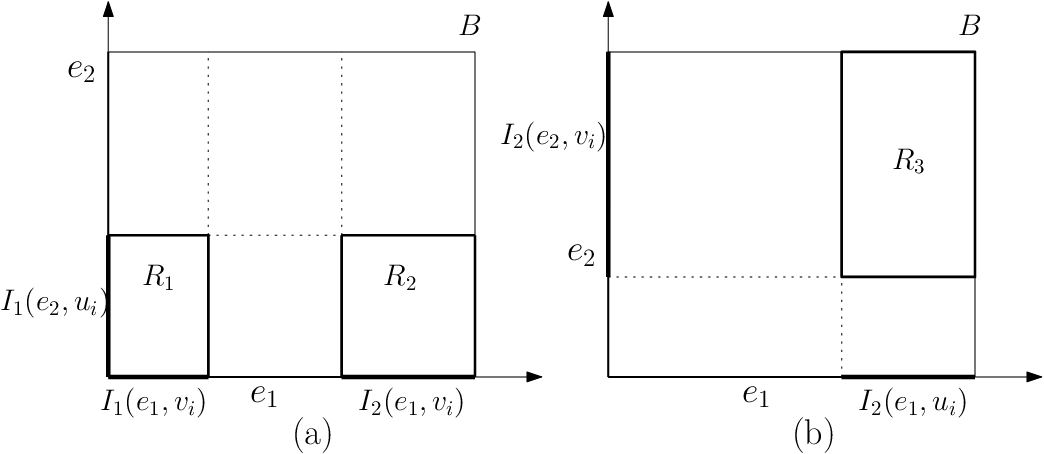}
    \caption{Illustrate that each assignment of pair $P_i=(v_i, u_i)$ leads at most four disjoint rectangles in box $B$ each sharing a vertex of $B$. In (a), $v_i$ (resp., $u_i$) is assigned to center $q_1\in e_1$ (resp., $q_2\in e_2$) so two rectangles $R_1$ and $R_2$ are generated; (b) illustrates the other assignment which leads rectangle $R_3$.}
    \label{fig:rectangleset}
\end{minipage}
\end{figure}

Consider the weighted distance $w(v)d(v,x)$ of each $v\in V'$ to a point $x$ of $e$. Recall that $x$ is at distance $t(x)$ to an incident vertex of $e$ along $e$. 
As $t(x)$ increases (namely, $x$ moves along $e$ from an incident vertex to the other), $w(v)d(v,x)$ linearly increases or decreases, and otherwise, it first 
linearly increases until $x$ reaches $v$'s \textit{semicircular} point on $e$, 
which is the unique point on $e$ so that $v$ has two shortest paths to it containing different incident vertices of $e$, and then linearly decreases. So, $w(v)d(v,x)$ is a piecewise linear function w.r.t. $t(x)$ (see Fig.~\ref{fig:distance}). As shown in~\cite{ref:KarivAnC79,ref:BhattacharyaIm14} 
for the graph one-center problem, the local center of $V'$ on $e$ 
is determined by an intersection between distance functions 
$y = w(v)d(v,x)$ of two vertices, where their slopes are of opposite signs. Hence, we have the following observation. 

\begin{observation}\label{obs:graphsetoflambda}
    $\lambda^*$ belongs to the set of values $w(v)d(v,x) = w(u)d(u,x)$ for each pair of vertices $u,v\in V$ w.r.t. every edge of $G$. 
\end{observation}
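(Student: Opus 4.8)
The plan is to reduce the claim to the structural characterization of the single-center optimum recalled in the paragraph just above the observation. Fix any optimal solution, i.e.\ two centers $q_1^*$ and $q_2^*$ together with the assignment realizing $\phi_i(q_1^*,q_2^*)$ for every pair $P_i$. Color a vertex red if it is assigned to $q_1^*$ and blue otherwise, and let $R$ and $B$ be the resulting red and blue vertex sets, so that $R$ contains exactly one vertex of each pair and $B$ the other. Under this fixed coloring we have $\lambda^* = \max\{\max_{v\in R} w(v)d(v,q_1^*),\ \max_{v\in B} w(v)d(v,q_2^*)\}$, since each $\phi_i$ attains its value under this assignment.

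First I would argue that, without loss of generality, $q_1^*$ is the center of $R$ on $G$ and $q_2^*$ is the center of $B$ on $G$. Write $\rho(R)$ and $\rho(B)$ for the corresponding one-center objective values. Once the coloring is frozen, the two centers decouple, because the distance of a red (resp.\ blue) vertex depends only on $q_1^*$ (resp.\ $q_2^*$): replacing $q_1^*$ by the center of $R$ and $q_2^*$ by the center of $B$ yields a solution whose cost under the same coloring is $\max\{\rho(R),\rho(B)\}\le\lambda^*$. As $\phi_i$ takes the minimum over the two assignments, the two-center objective of this relocated solution is no larger, hence at least $\lambda^*$ by optimality; combined with $\lambda^*\ge\max\{\rho(R),\rho(B)\}$ (the one-center value is a lower bound on the weighted radius about any fixed point), this forces $\lambda^* = \max\{\rho(R),\rho(B)\}$.

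It then suffices to locate $\lambda^*$ as an intersection value. Assume without loss of generality that $\lambda^* = \rho(R)$, the larger of the two radii. Now $\rho(R)$ is exactly the one-center objective value of the subset $R\subseteq V$, which by the characterization recalled before the observation (see~\cite{ref:KarivAnC79,ref:BhattacharyaIm14}) is attained at the local center of $R$ on the edge $e$ containing it, and this local center is determined by an intersection of two distance functions $y=w(v)d(v,x)$ and $y=w(u)d(u,x)$ of opposite slopes for some $u,v\in R$. Therefore $\lambda^* = \rho(R) = w(v)d(v,x)=w(u)d(u,x)$ on $e$, placing $\lambda^*$ in the claimed set of candidate values.

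The main obstacle is the decoupling step: one must verify that fixing the coloring induced by an optimal solution and then independently relocating each center to the one-center of its color class cannot destroy optimality. The argument hinges on the fact that, with the coloring frozen, the objective splits into two independent single-center problems, and that $\phi_i$ being a minimum over assignments guarantees the relocated centers can only help. Degenerate configurations — for instance a color class reduced to a single (possibly zero-weight) vertex after the padding reduction — should be checked separately, but they are subsumed by the cited structural result.
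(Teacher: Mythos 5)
Your proposal is correct and follows essentially the same route as the paper, which leaves the observation unproved but sets it up the same way: an optimal coloring reduces the problem to two independent one-center problems (the ``graph $2$-coloring variant'' minimizing the maximum of the red and blue one-center objective values), so $\lambda^*$ equals the larger one-center value, which by the characterization of~\cite{ref:KarivAnC79,ref:BhattacharyaIm14} is realized at an intersection $w(v)d(v,x)=w(u)d(u,x)$ of two distance functions with opposite slopes on some edge. Your explicit decoupling--relocation argument merely makes rigorous what the paper treats as immediate.
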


Clearly, this candidate set is of size $O(mn^2)$ and it can be computed in $O(mn^2)$ time by the 
provided distance matrix of $G$. 

When $G$ is a tree, for any two vertices, there is only one (simple) path between them. 
For any path $\pi'$, supposing $x$ is a point on $\pi'$, $w(v)d(v,x)$ is a 
convex function w.r.t. the distance of $x$ to an endpoint of $\pi'$. So, the center of $V'$ is determined by the center of two vertices on their path, and the size of the candidate set is decreased to $O(n^2)$~\cite{ref:MegiddoLi83,ref:BanikTh16,ref:WangAn21}. 

\begin{observation}\label{obs:treesetoflambda}
    For $G$ being a tree, $\lambda^*$ belongs to the set of values 
    $w(v)d(v,x) = w(u)d(u,x)$ for each pair of vertices $u,v\in V$ w.r.t. $x$ on the path between the two vertices. 
\end{observation}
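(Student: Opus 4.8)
The plan is to reduce the statement to the tree one-center structure already recalled just above. First I would fix an optimal solution, consisting of the two centers $q_1^*$ and $q_2^*$ together with the coloring it induces: let $V_r$ (resp.\ $V_b$) be the set of vertices assigned to $q_1^*$ (resp.\ $q_2^*$). For a fixed coloring the objective $\max\{\max_{v\in V_r} w(v)d(v,q_1),\ \max_{v\in V_b} w(v)d(v,q_2)\}$ splits into two independent terms, so without loss of generality $q_1^*$ and $q_2^*$ may be taken to be the one-centers of $V_r$ and $V_b$ respectively; replacing each center by the corresponding one-center yields a feasible solution with the same coloring and does not increase the objective. Consequently $\lambda^*=\max\{r(V_r),r(V_b)\}$, where $r(\cdot)$ denotes the one-center objective value of a vertex set, and without loss of generality $\lambda^*=r(V_r)$.

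It then remains to show that $r(V_r)$ lies in the claimed candidate set. Since on a tree each weighted distance function $w(v)d(v,x)$ is convex along any path, the upper envelope $\max_{v\in V_r} w(v)d(v,x)$ is convex along any path, so its unique minimizer over the tree is the one-center $c$ of $V_r$, with value $\lambda^*$. At $c$ there must be at least two \emph{active} vertices $u,v\in V_r$ attaining $w(u)d(u,c)=w(v)d(v,c)=\lambda^*$ whose distance functions pass through $c$ with opposite slopes: if only one active vertex existed, or all active vertices pulled the envelope in the same direction, then moving $c$ slightly toward the common side would strictly decrease the envelope, contradicting optimality of $c$. I would next argue that $c$ lies on the path $\pi(u,v)$: deleting $c$ separates the tree into components, and the two active vertices $u$ and $v$, being reached from $c$ in opposite directions, fall into different components, so the unique $u$--$v$ path must pass through $c$. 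Along $\pi(u,v)$ the function $w(u)d(u,x)$ increases monotonically away from $u$ while $w(v)d(v,x)$ decreases, hence they cross at the single balancing point $x=c$ with common value $\lambda^*=w(u)d(u,c)=w(v)d(v,c)$. This exhibits $\lambda^*$ as one of the stated values, for the pair $u,v$ with $x$ on their connecting path.

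The hard part is the second paragraph: pinning down the one-center $c$ of $V_r$ by exactly two balancing vertices lying on opposite sides of $c$, so that their connecting path crosses $c$. The convexity of the distance functions (as recalled above, following \cite{ref:MegiddoLi83,ref:BanikTh16,ref:WangAn21}) makes the local-to-global step routine, so the only genuine work is the tree-cut argument that places $c$ on $\pi(u,v)$. The degenerate case $\lambda^*=0$, where all of $V_r$ coincides at $c$, is immediate, since then any vertex realizes the value $0$ at $x=c$ on a trivial path.
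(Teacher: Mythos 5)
Your proposal is correct and follows essentially the same route as the paper: the paper justifies this observation by combining the (implicit) decomposition of $\lambda^*$ into the one-center objective values of the two color classes with the known fact, cited from~\cite{ref:MegiddoLi83,ref:BanikTh16,ref:WangAn21}, that on a tree the center of a vertex set is determined by the balancing point of two vertices' convex distance functions on their connecting path. The only difference is that you prove this cited fact from scratch (the active-pair and opposite-components argument at the one-center $c$), which the paper simply outsources to references.
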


Besides of computing the candidate set, an implicit way to enumerate this candidate set is forming the distance function $y = w(v)d(v,x)$ for each vertex w.r.t. every edge of $G$ in the $x,y$-coordinate plane such that $\lambda^*$ belongs to the set of 
the $y$-coordinates of intersections between the distance functions. 

Consider the set of $O(mn)$ lines obtained by extending each line segment on the graph 
of every distance function formed in the $x,y$-coordinate plane. These lines define a plane subdivision where each intersection of lines 
is a vertex and vice versa. A benefit for implicitly enumerating the candidate set 
is that the line arrangement search technique~\cite{ref:ChenAn13} can be utilized to search for $\lambda^*$ among the $y$-coordinates of all intersections without 
computing the arrangement of the lines in advance. Note that both explicit and implicit ways need the assistance of a feasibility test that determines for any given value $\lambda$, whether $\lambda\geq\lambda^*$. The line arrangement search technique is reviewed as follows. 

Suppose $L$ is a set of $N$ lines in the plane. Denote by $\mathcal{A}(L)$ the arrangement of lines in $L$. For any point $p$ in the plane, denote by $x(p)$ its $x$-coordinate and $y(p)$ its $y$-coordinate.  
Let $v_1(L)$ be the lowest vertex of $\mathcal{A}(L)$ whose $y(v_1(L))$ is a feasible value, and let $v_2(L)$ be the highest vertex of $\mathcal{A}(L)$ with $y(v_2(L)) < y(v_1(L))$. 
By the definition, $y(v_2(L))<\lambda^*\leq y(v_1(L))$, and no vertex in $\mathcal{A}(L)$ has a $y$-coordinate in the range $(y(v_2(L)), y(v_1(L)))$. 
Lemma~\ref{lem:linearrangement} was proved in~\cite{ref:ChenAn13} for finding the two vertices. 

\begin{lemma}\label{lem:linearrangement}
     \textup{\cite{ref:ChenAn13}} Both vertices $v_1(L)$ and $v_2(L)$ can be computed in $O((N+\tau)\log N)$ time, where $\tau$ is the time for a feasibility test. 
\end{lemma}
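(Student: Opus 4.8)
The plan is to search the $\Theta(N^2)$ vertices of $\mathcal{A}(L)$ implicitly, sorted by $y$-coordinate, using the feasibility test as the only oracle that compares a candidate $y$-value against $\lambda^*$. Every vertex of $\mathcal{A}(L)$ is the intersection of two lines of $L$, so its $y$-coordinate is one of the $\binom{N}{2}$ pairwise-intersection $y$-values. By the setup, $v_1(L)$ and $v_2(L)$ are exactly the two consecutive entries in this sorted multiset that bracket $\lambda^*$, and no vertex lies strictly between them. Hence it suffices to locate the rank of $\lambda^*$ within this sorted multiset and then read off the two bracketing vertices.

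First I would set up a rank-counting primitive by a horizontal sweep. For a fixed height $y$, intersecting all $N$ lines with the horizontal line at that height orders them left-to-right into a permutation $\sigma(y)$, and as $y$ decreases to $-\infty$ the order becomes a fixed permutation $\sigma(-\infty)$ determined by the slopes. Any two non-parallel lines cross exactly once, at the $y$-coordinate of their arrangement vertex, and each crossing is an adjacent transposition of the permutation. Therefore the number of arrangement vertices with $y$-coordinate below a queried value $y_0$ equals the number of inversions between $\sigma(-\infty)$ and $\sigma(y_0)$, which a merge-sort-based count computes in $O(N\log N)$ time after sorting the lines by their abscissa at height $y_0$. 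This gives, for any query value, its rank among the vertex $y$-coordinates.

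Next I would drive the search. Maintaining an interval $(\alpha,\beta)$ known to contain $\lambda^*$ together with the ranks of its endpoints, I repeatedly select a pivot of roughly median rank among the arrangement vertices whose $y$-coordinate lies in $(\alpha,\beta)$, run a single feasibility test on it, and discard the half of the range that cannot contain $\lambda^*$. Since $\mathcal{A}(L)$ has $O(N^2)$ vertices, $O(\log N)$ rounds isolate the gap containing $\lambda^*$, after which $v_1(L)$ and $v_2(L)$ are determined directly. Each round invokes one feasibility test, contributing $O(\tau\log N)$ overall; selecting the median-rank $y$-coordinate of pairwise intersections is the $y$-coordinate analogue of slope selection and is solvable in $O(N\log N)$ per round.

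The main obstacle is the complexity target: the naive combination above costs $O((N\log N+\tau)\log N)$, a spurious logarithmic factor over the claimed bound. To reach $O((N+\tau)\log N)$ I would not re-run a fresh $O(N\log N)$ selection each round, but instead interleave the pivot selection with the interval narrowing so that the arrangement-side work telescopes, following Cole's refinement of parametric search: one runs a single merge-sort/selection-network computation whose comparisons against $\lambda^*$ are resolved lazily in batches, with a weighted progress argument guaranteeing a constant-factor reduction of the surviving candidate vertices per feasibility test. This keeps the cumulative non-oracle work at $O(N\log N)$ while invoking only $O(\log N)$ feasibility tests. I would also dispose of degeneracies so that the inversion count yields a well-defined rank: parallel lines never intersect and are simply dropped, and coincident intersection $y$-coordinates are disambiguated by a consistent secondary key.
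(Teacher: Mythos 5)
The paper offers no proof of this lemma at all---it is quoted directly from Chen and Wang~\cite{ref:ChenAn13}---and your reconstruction follows the same underlying technique as that cited source: a binary search over the $O(N^2)$ vertex ordinates of $\mathcal{A}(L)$ driven by the monotone feasibility oracle, with the spurious logarithmic factor of the naive per-round selection removed by Cole's refinement of parametric search (simulating a logarithmic-depth sort of the lines at height $\lambda^*$, resolving comparisons lazily via weighted batches so that $O(\log N)$ oracle calls and $O(N\log N)$ total non-oracle work suffice), after which the final empty strip $(\alpha,\beta]$ yields $y(v_2(L))=\alpha$ and $y(v_1(L))=\beta$ exactly as you say. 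Your argument is essentially correct, with one technical caveat: Cole's trick requires a sorting scheme of logarithmic \emph{parallel} depth (the AKS network, or Cole's parallel merge sort), not a plain sequential merge sort whose individual merges have linear depth---your hedged phrasing ``merge-sort/selection-network'' should be pinned down to one of these, and, as you correctly note, the $O(N\log N)$-per-round inversion-counting of your warm-up scheme must be abandoned rather than repeated.
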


\section{The problem on a graph}\label{sec:graph}
In this section, we first give the algorithm for computing $\lambda^*$ and present our feasibility test later in Section~\ref{subsec:graphdecision}. 

The above analysis in Section~\ref{sec:preliminary} leads our algorithm: For each edge $e(u,u')$ of $G$, we form the distance function $y = w(v)d(v,x)$ for every vertex $v\in V$ w.r.t. $x$ on $e(u,u')$ in the $x,y$-coordinate plane where $e(u,u')$ is on the positive side of the $x$-axis with $u$ at the origin. By the given distance matrix, this can be carried out in $O(mn)$ time. Next, we extend each line segment on the graph of every distance function into a line, which generates in $O(mn)$ time a set $L$ of $O(mn)$ lines.   

Consider the line arrangement $\mathcal{A}$ of lines in $L$. We say that a value is feasible 
iff it is no less than $\lambda^*$, and otherwise, it is infeasible. 
Because $\lambda^*$ equals the $y$-coordinate of the lowest vertex in $\mathcal{A}$ whose $y$-coordinate is feasible. We apply Lemma~\ref{lem:linearrangement} to find this lowest vertex of $\mathcal{A}$ by using Lemma~\ref{lem:graphfeasibility} for the feasibility test. The proof of Lemma~\ref{lem:graphfeasibility} is given in Section~\ref{subsec:graphdecision}. Thus, the time complexity for computing $\lambda^*$ is $O(m^2n\log n\log mn)$. 

\begin{lemma}\label{lem:graphfeasibility}
    Given any value $\lambda>0$, we can decide whether $\lambda\geq\lambda^*$ in $O(m^2n\log n)$ time. 
\end{lemma}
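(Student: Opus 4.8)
The plan is to fix the given value $\lambda$ and test feasibility by examining every (unordered) pair of edges $(e_1,e_2)$ of $G$ in turn, asking whether there exist points $q_1\in e_1$ and $q_2\in e_2$ with $\phi_i(q_1,q_2)\le\lambda$ for all $1\le i\le k$. Since the two optimal centers must lie on some pair of edges, $\lambda$ is feasible exactly when at least one pair of edges admits such a point. I would parameterize a candidate center $q_1$ on $e_1$ by its position $x\in[0,|e_1|]$ and $q_2$ on $e_2$ by $y\in[0,|e_2|]$, so that the joint parameter space is the axis-parallel box $B=[0,|e_1|]\times[0,|e_2|]$.

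First I would describe the feasible region of a single pair inside $B$. Fix $P_i=(v_i,u_i)$ and consider the assignment of $v_i$ to $q_1$ and $u_i$ to $q_2$. By the analysis in Section~\ref{sec:preliminary}, $w(v_i)d(v_i,q_1)$ is a concave (tent-shaped) piecewise-linear function of $x$, so the set $\{x:w(v_i)d(v_i,q_1)\le\lambda\}$ is a union of at most two intervals, each anchored at an endpoint of $e_1$; symmetrically, $\{y:w(u_i)d(u_i,q_2)\le\lambda\}$ is a union of at most two endpoint-anchored intervals of $e_2$. Their product is therefore a union of at most four axis-parallel rectangles, each sharing a distinct corner of $B$ (see Fig.~\ref{fig:rectangleset}). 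Taking the union over the two assignments of $P_i$ yields its feasible region $S_i$ as a union of $O(1)$ corner-anchored rectangles, all computable in $O(1)$ time from the distance-matrix entries for $v_i,u_i$ at the endpoints of $e_1,e_2$. A point of $B$ realizes $\lambda$ for this pair of edges precisely when it lies in $\bigcap_i S_i$, i.e.\ when it pierces at least one rectangle of every $S_i$; this is the geometric piercing instance, with $O(n)$ rectangles in total since $k=O(n)$.

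Next I would solve this piercing instance by a left-to-right sweep in $x$. At a fixed $x$, every rectangle whose $x$-range contains $x$ contributes a $y$-interval that is either bottom-anchored $[0,b]$ or top-anchored $[b,|e_2|]$; hence the admissible $y$-values for $S_i$ form $[0,\beta_i(x)]\cup[\tau_i(x),|e_2|]$, whose complement in $[0,|e_2|]$ is a single forbidden interval determined by $\beta_i(x)$ and $\tau_i(x)$. A valid piercing point at this $x$ exists iff these $k$ forbidden intervals fail to cover $[0,|e_2|]$. The quantities $\beta_i,\tau_i$ change only at the $O(n)$ $x$-coordinates where a rectangle starts or stops being active, so between consecutive events the forbidden intervals are fixed. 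I would sort these $O(n)$ events and, over the $O(n)$ distinct $y$-breakpoints, maintain a segment tree that supports range $\pm1$ updates and keeps the minimum coverage at its root, so that one $O(1)$ root query decides whether some $y$-cell is uncovered. Each event alters the forbidden interval of a single subset, implemented as one deletion and one insertion of a $y$-range in $O(\log n)$ time (the new $\beta_i,\tau_i$ being recomputed from the $O(1)$ currently active rectangles of $S_i$); after each update a root query reports whether an uncovered $y$, and hence a piercing point, exists. The sweep thus runs in $O(n\log n)$ time for one pair of edges.

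Finally, repeating the test over all $O(m^2)$ pairs of edges (including $e_1=e_2$) and declaring $\lambda$ feasible as soon as one of them yields a piercing point gives total time $O(m^2n\log n)$, as claimed. The main obstacle is the sweep's bookkeeping: proving that the admissible region of each pair really collapses to $O(1)$ corner-anchored rectangles so that each subset contributes a single forbidden interval, and maintaining dynamic coverage of $[0,|e_2|]$ so that existence of an uncovered point is decidable in $O(\log n)$ per event. Degenerate situations---centers on a common edge, centers at vertices, empty or full admissible intervals, zero-weight (dummy) vertices, and boundary ties at the forbidden-interval endpoints---must be folded in uniformly without inflating the running time.
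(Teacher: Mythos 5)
Your proposal is correct and follows essentially the same route as the paper's proof: the same decomposition into $O(m^2)$ local feasibility tests over edge pairs (including $e_1=e_2$), the same reduction of each local test to piercing $O(1)$ corner-anchored rectangles per pair inside the box $B$, and the same left-to-right sweep that maintains, per pair $P_i$, a single forbidden $y$-interval in a segment tree with lazy range-addition and a root minimum query, yielding $O(n\log n)$ per edge pair and $O(m^2n\log n)$ in total. The degeneracies you flag at the end (open/closed boundary behavior of the forbidden intervals, empty feasible intervals, querying at event lines versus open slabs) are precisely what the paper resolves through its complement-segment formulation and Observation~\ref{obs:hitting}, so no substantive gap remains.
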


\begin{theorem}\label{the:graph}
    The bichromatic two-center problem can be solved in $O(m^2n\log n\log mn)$ time. 
\end{theorem}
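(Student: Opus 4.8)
The plan is to assemble the pieces already established in Sections~\ref{sec:preliminary} and~\ref{sec:graph} into a single search-plus-test argument, so that computing $\lambda^*$ reduces to locating one designated vertex of the line arrangement $\mathcal{A}$. First I would record the monotonicity of feasibility that is built into the definition: a value $\lambda$ is declared feasible exactly when $\lambda\geq\lambda^*$, so the feasible values form an up-closed half-line whose infimum is $\lambda^*$. Combined with Observation~\ref{obs:graphsetoflambda}, which places $\lambda^*$ in the candidate set of values $w(v)d(v,x)=w(u)d(u,x)$ over all vertex pairs and all edges, and with the implicit encoding of this candidate set as the $y$-coordinates of intersections of the $O(mn)$ lines in $L$, this shows that $\lambda^*$ is itself the $y$-coordinate of some vertex of $\mathcal{A}$, and in fact the smallest such $y$-coordinate that is feasible.

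Next I would match this characterization to the search primitive of Lemma~\ref{lem:linearrangement}. The vertex $v_1(L)$ is defined there as the lowest vertex of $\mathcal{A}$ whose $y$-coordinate is feasible; by the previous paragraph its $y$-coordinate is exactly $\lambda^*$, so it suffices to compute $v_1(L)$ and read off $y(v_1(L))$. Lemma~\ref{lem:linearrangement} computes $v_1(L)$ (and $v_2(L)$) in $O((N+\tau)\log N)$ time, where $N$ is the number of lines and $\tau$ is the cost of one feasibility test. Substituting $N=O(mn)$ for our line set and $\tau=O(m^2n\log n)$ from Lemma~\ref{lem:graphfeasibility}, and using $\log N=O(\log mn)$, the running time is $O\bigl((mn+m^2n\log n)\log mn\bigr)=O(m^2n\log n\log mn)$, which dominates both the $O(mn)$ preprocessing that builds $L$ and the $O(n+m)$ reduction of Section~\ref{sec:preliminary}. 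This yields the claimed bound.

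I expect the only substantive work to lie in Lemma~\ref{lem:graphfeasibility}, the feasibility test, whose proof is deferred to Section~\ref{subsec:graphdecision}; the present theorem is essentially a packaging step once that test and the arrangement-search machinery of Lemma~\ref{lem:linearrangement} are in hand. The one point to verify carefully is that $\lambda^*$ really is \emph{attained} as a vertex $y$-coordinate rather than merely bounded by one: this rests on Observation~\ref{obs:graphsetoflambda} guaranteeing membership of $\lambda^*$ in the candidate set, together with the fact that extending the segments of each distance function to full lines cannot destroy the intersection that realizes a given candidate value. Granting that, no further geometric argument is needed, and the theorem follows.
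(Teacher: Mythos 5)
Your proposal is correct and matches the paper's proof essentially step for step: the paper likewise forms the $O(mn)$ lines from the per-edge distance functions, identifies $\lambda^*$ with the $y$-coordinate of the lowest feasible vertex $v_1(L)$ of the arrangement, and invokes Lemma~\ref{lem:linearrangement} with the $O(m^2n\log n)$-time test of Lemma~\ref{lem:graphfeasibility} to get $O\bigl((mn+m^2n\log n)\log mn\bigr)=O(m^2n\log n\log mn)$. Your extra check that $\lambda^*$ is attained as a vertex $y$-coordinate (via Observation~\ref{obs:graphsetoflambda} and the segment-to-line extension) is a sound, if implicit, part of the paper's argument as well.
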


\subsection{Proof of Lemma~\ref{lem:graphfeasibility}}\label{subsec:graphdecision}
Given any value $\lambda>0$, the feasibility test aims to determine whether there are 
two points $q_1$ and $ q_2$ on $G$ so that $\max_{1\leq i\leq k}\phi_i(q_1, q_2)\leq\lambda$. 
If yes, then $\lambda\geq\lambda^*$ so it is feasible, 
and otherwise, $\lambda<\lambda^*$ and it is infeasible. 

Our strategy is to determine for every pair of edges, 
say $(e_1, e_2)$, whether there exist two points $q_1\in e_1$ and $q_2\in e_2$ 
so that $\max_{1\leq i\leq k}\phi_i(q_1, q_2)\leq\lambda$. 
Note that $e_1$ might be same as $e_2$ for considering the case where $q^*_1$ 
and $q^*_2$ are on the same edge. Refer to the problem of determining the existence of such two points for a pair of edges as a \textit{local} feasibility test. 

Below, we shall present how to solve a local feasibility test: Given two edges $e_1, e_2\in G$, our goal is to find two points $q_1\in e_1$ and $q_2\in e_2$ such that 
by assigning one vertex of each pair $P_i$ to $q_1$ and the other to $q_2$, 
the maximum (weighted) distance of each vertex to its assigned point is no more than $\lambda$. 

We first present below that solving the local feasibility test is equivalent to solving the following geometric piercing problem: Let $U$ be a set of axis-parallel rectangles in a box each sharing a vertex of the box, 
i.e., a set of quadrants. Let $U_1, \cdots, U_k$ be a set of subsets of $O(1)$ rectangles in $U$. The goal is to determine whether there is a point that pierces (or lies in) a rectangle in $U_i$ for each $1\leq i\leq k$. 

\paragraph{Problem reduction.} For each vertex $v$, solving $w(v)d(v,x)\leq\lambda$ 
for $x\in e_i$ with $1\leq i\leq 2$ generates at most two disjoint closed intervals 
each of which includes an incident vertex of $e_i$; we say that each obtained interval is 
a \textit{feasible} interval of $v$ on $e_i$. 
Suppose $e_i$ is incident to vertex $\alpha_i$ and $\beta_i$. 
Let $I_1(v, e_i)$ and $I_2(v, e_i)$ represent $v$'s feasible intervals on $e_i$ 
where $I_1(v, e_i)$ includes $\alpha_i$ and $I_2(v, e_i)$ includes $\beta_i$; 
note that $I_1(v, e_i)$ or $I_2(v, e_i)$ may be null if only one interval 
or none is obtained by solving $w(v)d(v,x)\leq\lambda$ for $x\in e_i$. 

Consider the $x,y$-coordinate plane where $e_1$ (resp., $e_2$) is on the positive side 
of $x$-axis (resp., $y$-axis) with $\alpha_1$ (resp., $\alpha_2$) at the origin. 
Let $B$ be the intersection box (rectangle) of slabs 
$\{(x,y)|0\leq x\leq x(\beta_1)\}$ and $\{(x,y)|0\leq y\leq y(\beta_2)\}$.

For each pair $P_i=(v_i,u_i)$ of $\calP$, consider the intersection 
of (vertical) slabs respectively containing $v_i$'s feasible intervals $I_1(e_1, v_i)$ and $I_2(e_1, v_i)$ on $e_1$ 
and (horizontal) slabs respectively containing $u_i$'s feasible intervals $I_1(e_2, u_i)$ 
and $I_2(e_2, u_i)$ on $e_2$; their intersection consists of four disjoint axis-parallel 
rectangles in $B$, each sharing a vertex with $B$. 
As illustrated in Fig.~\ref{fig:rectangleset} (a), there exist two points $q_1\leq e_1$ and $q_2\leq e_2$ 
so that $v_i$ is covered by $q_1$ and $u_i$ is covered by $q_2$ (under $\lambda$) 
iff point $(x= x(q_1), y = y(q_2))$ pierces one of the intersection rectangles. 

Consider the other assignment where $v_i$ is assigned to the center on $e_2$ 
and $u_i$ to the center on $e_1$. Likewise, two points $q_1\leq e_1$ and $q_2\leq e_2$ exist 
such that $w(v_i)d(v_i, q_2)\leq\lambda$ and $w(u_i)d(u_i,q_1)\leq\lambda$ 
iff point $(x = x(q_1), y = y(q_2))$ pierces one of $O(1)$ disjoint two-sided intersection 
rectangles of slabs respectively containing $v_i$'s feasible intervals 
$I_1(e_2, v_i)$ and $I_2(e_2, v_i)$ on $e_2$ and horizontal slabs 
respectively containing $u_i$'s feasible intervals $I_1(e_1, u_i)$ 
and $I_2(e_1, u_i)$ on $e_1$ (see Fig.~\ref{fig:rectangleset} (b)).

It should be mentioned that if $v_i$ and $u_i$ have no feasible intervals on $e_1$ (resp., $e_2$), 
then it is not likely to find two points $q_1\in e_1$ and $q_2\in e_2$ 
with $\phi_{i}(q_1,q_2)\leq\lambda$ and thereby, $\lambda$ is infeasible. 

Let $U_i$ be the set of the $O(1)$ axis-parallel rectangles obtained for the two 
assignments of $v_i$ and $u_i$. 
The above analysis implies that there are two points $q_1\leq e_1$ 
and $q_2\leq e_2$ with $\phi_i(q_1, q_2)\leq\lambda$ iff point $(x = x(q_1), y = y(q_2))$ 
pierces a rectangle in $U_i$ for each $1\leq i\leq k$. 
It follows that if there is a point in $B$ that pierces a rectangle of $U_i$ 
for each $1\leq i\leq k$, then $\lambda$ is feasible, and otherwise, 
$\lambda$ is infeasible w.r.t. the two edges $e_1$ and $e_2$. 

As a result, each local feasibility test can be reduced to an instance of 
the geometric piercing problem, which asks whether there is a point in $B$ 
that pierces at least one rectangle of $U_i$ for each $1\leq i\leq k$. 
Regarding to the reduction time, the feasible intervals of each vertex on any edge can be 
obtained in $O(1)$ time with the provided distance matrix; additionally, each 
rectangle set $U_i$ can be obtained in $O(1)$ time. Hence, the reduction can be done in $O(n)$ time. 

The geometric piercing problem is quite interesting since each geometric object 
is a set of multiple axis-parallel two-sided rectangles. 
To the best of our knowledge, no previous work exists for addressing this problem. 
We shall present below a sweeping algorithm for solving it in $O(n\log n)$ time. 

\subsubsection{Solving the geometric piercing problem} 
We say that a rectangle set $U_i$ is hit by a point if it pierces 
a rectangle of $U_i$. Hence, the goal is to find a point (in box $B$) that 
hits set $U_i$ for each $1\leq i\leq k$. 

Obviously, if all sets $U_i$'s can be hit by a point, 
then at least one such point is on the left side of a rectangle. Thus, our strategy is to sweep the plane by a 
vertical line $h$ from left to right while determining whether 
such a point exists at each event where $h$ encounters a rectangle. 

Let $X = \{x_1, \cdots, x_N\}$ be the $x$-coordinate sequence of 
vertical sides of $B$ and all rectangles in $U$ from left to right; 
clearly, $N = O(n)$. 
For each $1\leq j\leq N$, let $S_j$ be the intersection segment of 
line $x = x_j$ and box $B$ where $S_j$'s upper endpoint is denoted by $a_j$ 
and its lower endpoint is by $b_j$. 
For each $1\leq i\leq k$, the intersection of line $x = x_j$ and (rectangles of) set $U_i$ is null, a closed segment on $S_j$ 
that includes $a_j$ or $b_j$, or two disjoint closed segments on $S_j$ 
which end respectively at $a_j$ and $b_j$. 
A point piercing an intersection segment of line $x=x_j$ and 
each set $U_i$ hits all sets $U_i$. 

Instead, we consider the following problem to determine whether 
such a point on $S_j$ exists or not. 
Let $l_i$ be the complement segment of the intersection segment of $S_j$ and $U_i$. 
$l_i$ could be either of the following: $S_j$, an open segment on $S_j$, 
a half-open segment closing at either $a_j$ or $b_j$, and null. 
See Fig.~\ref{fig:piercing} for an example. 
We have the following observation. 

\begin{figure}
\begin{minipage}{0.48\linewidth}
    \centering
    \includegraphics[width=0.55\linewidth]{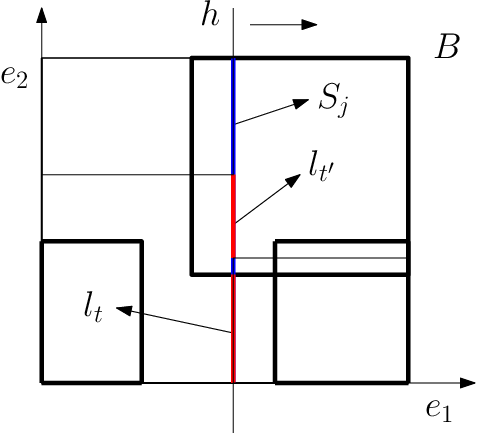}
    \caption{Illustrating Observation~\ref{obs:hitting}. $U_t$ has three (bold) rectangles and $U_{t'}$ has two (non-bold) rectangles. The sweeping line $h$ intersects $B$ at the (blue) line segment $S_j$. The complement segments of $U_t$ and $U_{t'}$ on $S_j$ are $l_t$ and $l_{t'}$.}
    \label{fig:piercing}
\end{minipage}
\begin{minipage}{0.48\linewidth}
    \centering
    \includegraphics[width=0.7\linewidth]{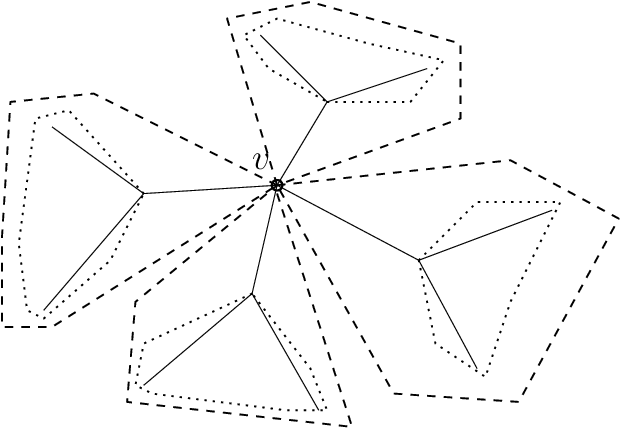}
    \caption{Vertex $v$ contains four split subtrees (marked by dotted lines) and four hanging subtrees (marked by dashed lines); $v$ is a virtual point in its each hanging subtree.}
    \label{fig:splitsubtree}
\end{minipage}
\end{figure}

\begin{observation}\label{obs:hitting}
    $\cup_{i=1}^{k} l_i = S_j$ iff no point of $S_j$ hits sets $U_i$ for all $1\leq i\leq k$. 
    If $\cup_{i=1}^{k} l_i\neq S_j$, then an endpoint of a complement segment 
    hits all sets $U_i$'s. 
\end{observation}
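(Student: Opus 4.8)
The plan is to convert the geometric \emph{hitting} condition into a pure covering statement on the segment $S_j$, and then exploit the quadrant structure of the rectangles (each rectangle of $U_i$ shares a vertex of $B$), which forces every complement segment $l_i$ to be relatively open at any endpoint lying in the interior of $S_j$. First I would record the basic dictionary: by definition a point $p\in S_j$ hits $U_i$ exactly when $p$ pierces some rectangle of $U_i$, i.e. when $p$ lies in the intersection segment of $S_j$ and $U_i$; since $l_i$ is the complement of that intersection inside $S_j$, this is equivalent to $p\notin l_i$. Hence $p$ hits every set iff $p\notin\bigcup_{i=1}^{k} l_i$, so a point of $S_j$ hitting all the $U_i$ exists iff $\bigcup_{i=1}^{k} l_i\neq S_j$. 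Taking the contrapositive gives the first assertion: $\bigcup_{i=1}^{k} l_i=S_j$ iff no point of $S_j$ hits all the $U_i$.

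For the second assertion, suppose $\bigcup_{i=1}^{k}l_i\neq S_j$, so the uncovered set $W:=S_j\setminus\bigcup_{i=1}^{k}l_i$ is nonempty; every point of $W$ already hits all sets, and it only remains to locate such a point at an endpoint of some $l_i$. Here I would invoke the structural description of $l_i$ guaranteed just before the statement: because each rectangle of $U_i$ abuts $a_j$ or $b_j$, the set $l_i$ is either $S_j$, an open segment, a segment half-open at $a_j$ or $b_j$, or null. The feature I would extract is that any endpoint of $l_i$ interior to $S_j$ is an \emph{open} endpoint; equivalently, $\bigcup_i l_i$ is relatively open in the interior of $S_j$. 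I would then take a maximal subinterval $J=[s,t]$ of $W$. Unless $J=S_j$, at least one endpoint of $J$ is interior to $S_j$: if $t\neq a_j$, then points immediately above $t$ are covered while $t$ itself is not, forcing $t$ to be the open lower endpoint of some $l_i$, and symmetrically if $s\neq b_j$. Either way $t$ (or $s$) is an endpoint of a complement segment that hits all sets. The residual case $J=S_j$ forces every $l_i$ to be null, a degenerate situation in which no $U_i$ constrains $S_j$ and any point of $S_j$ hits all sets.

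The delicate point, and the step I expect to require the most care, is the second assertion — specifically ruling out that the only uncovered points sit at the extreme endpoints $a_j$ or $b_j$ without being an endpoint of any $l_i$. The openness property is exactly what resolves this: a point of $W$ adjacent to the covered region is necessarily realized as an open endpoint of the $l_i$ covering its neighbors, so it genuinely qualifies as an endpoint of a complement segment. Once this is established, the observation legitimizes restricting the vertical sweep, at each event line $x=x_j$, to the $O(k)$ endpoints of the complement segments, which is precisely what the $O(n\log n)$ piercing routine depends on.
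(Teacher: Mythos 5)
Your proof is correct and follows essentially the same route as the paper's: the first equivalence via the covering dictionary ($p$ hits all sets iff $p\notin\bigcup_{i=1}^{k}l_i$) is identical, and your second part supplies, via the openness of complement segments at interior endpoints of $S_j$ and a maximal uncovered interval, a rigorous justification of the boundary-point claim that the paper merely asserts in one sentence. Your explicit handling of the degenerate case where every $l_i$ is null (so no complement-segment endpoints exist, yet every point of $S_j$ hits all sets) is in fact more careful than the paper, whose second statement is technically vacuous in that situation.
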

\begin{proof}
    Suppose no point on $S_j$ hits all rectangle sets $U_i$. For any point $q$ on $S_j$, there must be at least one set, say $U_{i'}$, so that $q$ is not contained in its intersection segment(s) with $S_j$. It follows that every point of $S_j$ is contained in a complement segment. 
    Thus, $\cup_{i=1}^{k} l_i = S_j$. For the other direction, suppose $\cup_{i=1}^{k} l_i = S_j$. Every point of $S_j$ is contained in a complement segment. 
    It follows that no point on $S_j$ hits sets $U_{i}$ for all $1\leq i\leq t$. 
    This proves the first statement. 

    If $\cup_{i=1}^{k} l_i \neq S_j$, an endpoint of the intersection segment(s) 
    between a rectangle set and $S_j$ hits all sets $U_i$; this endpoint is also an endpoint 
    of the complement segment of the rectangle set on $S_j$. Hence, the second statement is true. \qed
\end{proof}

Observation~\ref{obs:hitting} implies that for each $1\leq j\leq N$, 
determining whether there is a point on $S_j$ that hits sets $U_i$ for 
all $1\leq i\leq k$ is equivalent to determine 
whether $\cup_{i=1}^{n} l_i = S_j$. 

In fact, for each $1\leq j\leq N$, the complement segment $l_i$ of set $U_i$ 
on $S_j$ is the intersection of line $x=x_j$ (or $S_j$) and the complement area 
of rectangles of $U_i$ in box $B$. Because each $U_i$ consists of at most 
eight axis-parallel rectangles in $B$, each sharing a vertex with $B$. 
The complement area of $U_i$ in $B$ consists of $O(1)$ disjoint rectilinear polygons 
each with $O(1)$ sides such that all their sides except those in $B$'s boundary 
are open and no vertical line intersects more than one of them. 
Consequently, as sweeping $h$ from left to right, there will be $O(1)$ distinct complement segments 
for each $U_i$ on the intersection segment between $B$ and $h$. 
Denote by $\calL$ the set of the $O(1)$ complement segments of set $U_i$ for all $1\leq i\leq k$. 
The above analysis and observation~\ref{obs:hitting} imply the following observation. 

\begin{observation}\label{obs:endpointhitting}
     If all sets $U_i$ can be hit by a point of $B$, then an endpoint of a segment in $\calL$ must hit all sets $U_i$. 
\end{observation}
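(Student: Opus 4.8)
The plan is to prove the claim by a two-stage pushing argument that relocates an arbitrary witness to an endpoint of a complement segment, invoking the earlier remark that a hitting point may be assumed on the left side of a rectangle together with Observation~\ref{obs:hitting}. Assume some point $p\in B$ hits every $U_i$; for each $i$ fix a rectangle $R^{(i)}\in U_i$ that $p$ pierces.

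First I would push $p$ horizontally to the left. Since $p\in R^{(i)}$ for all $i$, moving $p$ left keeps it inside $R^{(i)}$ as long as $x(p)\ge\mathrm{left}(R^{(i)})$, and decreasing $x(p)$ can never drive $p$ off a rectangle through its right side; hence $p$ remains a hitting point until $x(p)$ reaches $\max_i\mathrm{left}(R^{(i)})$. At this stop $p$ lies on the left side of a maximizing rectangle while still piercing every $R^{(i)}$, so it still hits all $U_i$, and its $x$-coordinate is the abscissa of a vertical rectangle side (or of $B$), hence an event coordinate $x_j\in X$. This realizes the earlier remark that a hitting point may be assumed on the left side of a rectangle, now pinned to an event line $S_j$.

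Then I would push $p$ vertically within $S_j$. The relocated $p$ certifies that $S_j$ contains a point hitting all $U_i$, so by the first part of Observation~\ref{obs:hitting} we get $\bigcup_{i=1}^{k}l_i\ne S_j$; its second part then supplies an endpoint of some complement segment on $S_j$ that hits every $U_i$. Because $x_j\in X$, the complement segment carrying this endpoint is one of the $O(1)$ combinatorially distinct complement segments recorded in $\calL$, so the endpoint is an endpoint of a segment in $\calL$, which is precisely the point the statement asserts to exist.

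The hard part will be the bookkeeping that certifies the endpoint delivered by Observation~\ref{obs:hitting} at $x_j$ is genuinely an endpoint of a segment in $\calL$, and not an interior point of some horizontal edge of a complement polygon. I expect to settle this by noting that $x_j$ is the abscissa of a vertical rectangle side, so the sweep configuration changes exactly there and the hitting endpoint coincides with a vertex of a complement region, i.e.\ an endpoint of the corresponding segment in $\calL$. The remaining degenerate cases (a rectangle whose left side is the left side of $B$, giving $x_j=0$, or a witness $p$ already lying on an event line) are routine and do not affect the argument.
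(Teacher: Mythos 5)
Your proof is correct and follows essentially the same route the paper intends: the paper gives no standalone proof, deriving the observation from the earlier remark that a hitting point may be assumed on the left side of some rectangle (your stage-one leftward push, which correctly uses that the rectangles are closed and that every left side has abscissa in $X$) combined with Observation~\ref{obs:hitting} applied on the resulting event segment $S_j$, exactly as in your stage two. One caution about your closing paragraph: the ``hard part'' you flag is a non-issue, and the fix you sketch is false in general. By definition, $\calL$ collects the $O(1)$ combinatorially distinct complement segments of each $U_i$ arising on the sweep segment as $h$ moves, so the endpoint delivered by Observation~\ref{obs:hitting} at $x=x_j$ is automatically an endpoint of a segment in $\calL$ (its $y$-coordinate lies in $Y$, which is all the algorithm uses); your stage two already states this correct justification. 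In contrast, that endpoint need \emph{not} coincide with a vertex of a complement polygon: the event abscissa $x_j$ may be contributed by a rectangle of some other set $U_{i''}$, in which case the endpoint of $U_{i'}$'s complement segment at $x_j$ lies in the relative interior of a horizontal open edge of $U_{i'}$'s complement region. So the extra claim in the last paragraph should simply be deleted; nothing in the argument depends on it.
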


There is no point in box $B$ hitting all sets $U_i$ iff every point of $B$ is in the complement area 
of a rectangle set. It follows that no point of $B$ hits all sets $U_i$ 
iff the measure of the union of all complement polygons of sets $U_i$ is equal to $B$'s. 
It seems that the algorithms~\cite{ref:BentleyAl79,ref:PreparataCo85} 
for the 2D Klee's measure problem, which asks for the measure of the union of axis-parallel rectangles, can be adapted to compute the measure of their union after decomposing every polygon into $O(1)$ axis-parallel rectangles. 
However, every complement polygon in our problem might be open at its sides. Hence, the 2D Klee's measure algorithms cannot be adapted to solve our problem. 

Now we are ready to describe our sweeping algorithm for solving this 
geometric piercing problem. 

\paragraph{Preprocessing.} We first compute the sequence $X$ as well as sequences $Y = \{y_1, \cdots, y_M\}$, $I_{in} = \{I^1_{in}, \cdots, I_{in}^N\}$, and $I_{out} = \{I^1_{out}, \cdots, I^N_{out}\}$, where $Y$ is the ascending order of $y$-coordinates of endpoints of segments in $\calL$, and each $I^{t}_{in}$ (resp., $I^{t}_{out}$) stores the indices of all rectangle sets $U_i$ that contain 
rectangles with the left (resp., right) side on line $x=x_j$. 
Clearly, $M = O(n)$, $\cup_{t=1}^N |I^t_{in}| = O(k)$ and $\cup_{t=1}^N |I^i_{out}| = O(k)$; the four sequences can be obtained in $O(n\log n)$ time. 

Furthermore, we construct a balanced binary search tree $\Gamma$ on $Y$ to maintain 
the event status when the sweeping line $h$ intersects a line $x = x'$: 
letting $S'$ be the intersection segment between $h$ and box $B$, the number 
of complement segments on $S'$ of sets $U_i$ that contain the point on $S'$ 
of $y$-coordinate equal $y_t$ for each $y_t\in Y$, 
and the minimum of such attributes for all points defined by $Y$ on $S'$. 
(Because a point of $y$-coordinate in $Y$ must hit all sets $U_i$ 
if all sets $U_i$ can be hit by a point of $B$.)

In details, all $y$-coordinates of $Y$ are stored in the leaf level 
as keys from left to right in order. Each leaf node $u$ stores another 
attribute $\eta$, which is the number of complement segments on $S'$ 
including the point on $S'$ of $y$-coordinate equal to its key, 
denoted by $key(u)$. Every internal node $u$ is attached two attributes: 
The smallest key $key(u)$ and the smallest $\eta$ 
among all nodes in the subtree $\Gamma_{u}$ rooted at $u$. All attributes $\eta$ are initialized as zero. 

In addition, $\Gamma$ supports two operations: The \textit{range-addition} operation which is adding a given value to attributes $\eta$ in leaves of keys within a given range, and the \textit{minimum} query for the minimum $\eta$ over the whole tree (or all leaves). 
To achieve a logarithmic-time updating, we apply the lazy propagation mechanism~\cite{ref:HalimCo13,ref:IbtehazMu21} 
to $\Gamma$ (so as to avoid updating all nodes of keys within the given range at once): 
Each node $u$ contains an extra attribute $z(u)$ such that when $z(u)\neq 0$, an addition of $z(u)$ needs to be applied to attribute $\eta$ of each (leaf) node in $\Gamma_u$ 
(and $u$ is a \textit{lazy} node); $z(u)$ is set as zero at the beginning. 

Clearly, $\Gamma$ is of size $O(n)$ and height $O(\log n)$; 
provided with $Y$, the construction can be finished in $O(n)$ time. 
In the following, to distinguish the status of $\Gamma$ at different events, we denote by $\Gamma(x')$ the status of $\Gamma$ when $h$ intersects a line $x=x'$. 

\paragraph{The algorithm.} Sweep the plane by the vertical line $h$ from left to right. 
For each $x_j\in X$, $h$ halts at the event of $h$ intersecting line $x = x_j$. 
For each event, we first determine the status $\Gamma(x_j)$ of $\Gamma$ and 
then decide whether $\cup_{i=1}^{k} l_i= S_j$ by performing a minimum query 
on $\Gamma(x_j)$. If the obtained value is zero, then a point of $S_j$ 
is not contained in any complement segment on $S_j$, so a point of $S_j$ hits all sets $U_i$; 
otherwise, by observation~\ref{obs:hitting}, we have $\cup_{i=1}^{k} l_i= S_j$. 
More details of handling each event are as follows: 

\begin{itemize}
    \item $h$ intersects line $x= x_1$: $I^1_{out}= \emptyset$. 
    For each index $i\in I_{in}^1$, we first compute the complement segment $l_i$ of $U_i$ on $S_1$ 
    in constant time, and then perform a range-addition operation on $\Gamma$ to add $1$ to attribute $\eta$ of every leaf of key in the $y$-coordinate range of $l_i$ (namely, inserting a new segment that covers each point on $S_1$ of $y$-coordinate in the $y$-coordinate range of $l_i$). Clearly, computing $\Gamma(x_1)$ takes $O(|I_{in}^1|\log n)$ time. Afterward, we perform a minimum query on $\Gamma(x_1)$ in $O(1)$ time. If it is zero, then a point on $S_1$ hits all sets $U_i$ so we terminate the algorithm, 
    and otherwise, we continue the sweeping. 
    
    It should be mentioned that $\Gamma(x_1)$ is consistent with the status $\Gamma(x')$ for any $x_1<x'<x_2$. 

    \item $h$ intersects line $x= x_j$ with $1<j< N$. $\Gamma$'s current status is consistent 
    with its status for $h$ in slab $x_{j-1}<x< x_j$. 
    Recall that only set $U_t$ for each $t\in I^j_{in}$ contains rectangles whose left sides are in line $x=x_j$ (namely, $h$ enters some rectangles of $U_t$ at line $x=x_j$). 
    To determine $\Gamma(x_j)$, for each $t\in I^j_{in}$, 
    we compute the complement segment $l'_t$ of $U_t$ on a vertical line $x = x'$ with $x_{j-1}<x'<x_j$, 
    as well as its complement segment $l_t$ on $S_j$. Next, for each $t\in I^j_{in}$, we first 
    decrement attributes $\eta$ of leaves with keys in the $y$-coordinate range of $l'_t$ 
    and then increment attributes $\eta$ in leaves with keys in the $y$-coordinate range of $l_t$, 
    which can be carried out in $O(\log n)$ time by performing two range-addition operations on $\Gamma$. 
    This derives $\Gamma(x_j)$ and the running time is $O(|I_{in}^j|\log n)$. 
    
    We proceed to perform a minimum query on $\Gamma$ to determine whether 
    $\cup_{i=1}^{k} l_i= S_j$. If the obtained value equals zero, 
    then a point on $S_j$ hits all sets $U_i$ so we immediately return. 
    Otherwise, we proceed to determine the status $\Gamma(x')$ for any $x_j<x'<x_{j+1}$ 
    by processing each set $U_t$ with $t\in I_{out}^j$. (Recall that each set $U_t$ with $t\in I_{out}^j$ contain a rectangle whose right side is in line $x=x_j$ so $h$ is leaving a rectangle.) 
    
    For each $t\in I_{out}^j$, we compute the complement segment $l_t$ of set $U_t$ on $S_j$, 
    and perform a range-addition operation to decrement attribute $\eta$ of 
    each leaf of key in the $y$-coordinate range of $l_t$; next, we compute its complement segment $l''_t$ 
    on the intersection segment between $B$ and line $x=x'$ for any $x'\in (x_j, x_{j+1})$, 
    followed by executing a range-addition operation to increment 
    attribute $\eta$ of each leaf of key in the $y$-coordinate range of $l''_t$. 

    The overall running time for handling such event is $O((|I_{in}^j|+|I_{out}^j|)\log n)$.

    \item $h$ intersects line $x = x_N$: Because $\Gamma(x_N)$ is consistent with $\Gamma(x')$ for any $x_{N-1}<x'<x_N$, which is the current status of $\Gamma$. We perform a minimum query on $\Gamma$. If it is zero then a point on $S_N$ hits sets $U_i$. Otherwise, no point in box $B$ (including its boundaries) 
    hits sets $U_i$ (so the union of complement rectilinear polygons of sets $U_i$ in $B$ is exactly box $B$). 
\end{itemize}

It remains to show how to perform on $\Gamma$ a range-addition operation with lazy propagation and a query for the minimum attribute $\eta$ of $\Gamma$. 

\begin{itemize}
    \item Range-addition. Given any range $\gamma$ and value $c$, 
    the goal is to add value $c$ to attribute $\eta$ of each leaf 
    with key in $\gamma$ while updating attribute $\eta$ of necessary internal nodes. 
    Let $u_a$ and $u_b$ be the two leaves whose keys are the boundary values of $\gamma$ with $key(u_a)\leq key(u_b)$. 
    First, we locate the lowest common ancestor $lcs(u_a, u_b)$ of $u_a$ and $u_b$, where the searching path for $u_a$ and $u_b$ starts to split. During the searching for $lcs(a,b)$, 
    for each node $u$ encountered, we check if $z(u)\neq 0$. If so, then an addition of $z(u)$ needs to be applied to attribute $\eta$ of all nodes in subtree $\Gamma_u$; but with the lazy propagation, 
    we `push down' this update only to its children by adding $z(u)$ to attribute $\eta$ of $u$ and attribute $z(u')$ of its each child $u'$; last, we reset $z(u)$ of $u$ as zero.

    Once we arrives at $lcs(u_a,u_b)$, we continue to search for $u_a$ in the subtree rooted at its left child $u_L$, 
    and then find $u_b$ in the one rooted at its right child $u_R$. 
    During the traversal on $\Gamma_{u_L}$, for each node, we first push down the update to its children and then decide 
    the searching direction for $u_a$. Note that after arriving $u_a$, every node on the searching path of $u_a$ is not lazy. This means that the current value $z(u')$ in any lazy child node $u'$ of nodes on this path is the sum of all previous addends to attribute $\eta$ of every node in the subtree rooted at $u'$; 
    it follows that $z(u')$ plus the current value of $\eta$ in $u'$ is the up-to-date value of $\eta$ 
    after all prior updates.  
    
    Furthermore, during the traceback from $u_a$ to $u_L$, we perform an addition of value $c$ to necessary nodes in range $\gamma$ (with lazy propagation). 
    More specifically, for node $u_a$, if $\gamma$ is closed at $key(u_a)$, then we add value $c$ to $\eta$ of $u_a$. 
    For every other node $u$, we find its child $u'$ of key larger than $key(u_a)$ and add $c$ to $z(u')$ (in that the key of every leaf in $\Gamma_{u'}$ is in range $\gamma$). We then perform a push-down operation at each child of $u$ followed by setting attribute $\eta$ of $u$ as the minimum $\eta$ of its children. 
    Note that the current value $\eta$ of $u$ is the minimum $\eta$ 
    over subtree $\Gamma_{u}$ as all previous additions including value $c$ have been properly 
    made to relevant nodes in $\Gamma_{u}$ since values $\eta$ of $u$'s child nodes 
    are up to date after the above push-down operations at them. 

    For $u_b$, we perform a similar routine in the subtree $\Gamma_{u_R}$. After that, we traverse back to the root 
    from $lcs(u_a,u_b)$, during which for each node, we apply a push-down operation at its each child and then reset its attribute $\eta$ as the minimum $\eta$ of its children. 
    Clearly, the running time is $O(\log n)$.

    \item The minimum query. Because the (global) minimum of attributes $\eta$ in $\Gamma$ is always well maintained at its root. The minimum query can be answered in $O(1)$ time. 
\end{itemize}

Recall that our sweeping algorithm processes each event of $h$ 
intersecting line $x=x_j$ for each $x_j\in X$ in $O((|I^j_{in}|+|I^j_{out}|)\log n)$ time. 
Due to $\cup_{t=1}^N|I^t_{in}| = O(n)$ and $\cup_{t=1}^N|I^t_{out}| = O(n)$, including the preprocessing work, we can determine whether there is a point in $B$ hitting all sets $U_i$ in $O(n\log n)$ time. 

\begin{lemma}\label{lem:piercingresult}
    The geometric piercing problem can be solved in $O(n\log n)$ time. 
\end{lemma}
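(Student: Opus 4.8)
The plan is to prove the two halves of the claim separately: correctness of the sweeping algorithm, and its $O(n\log n)$ running time. Since the algorithm, the preprocessing, and the $\Gamma$ operations have already been described, the proof amounts to verifying an invariant for $\Gamma$ and summing the per-event costs.

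For correctness, I would first argue that scanning only the discrete candidate points $(x_j, y_t)$ with $x_j \in X$ and $y_t \in Y$ loses no solution. The remark that a hitting point, if one exists, can be taken on the left side of a rectangle lets me assume its $x$-coordinate lies in $X$; restricting to the corresponding intersection segment $S_j$, Observation~\ref{obs:endpointhitting} (equivalently the second part of Observation~\ref{obs:hitting}) further lets me replace it by an endpoint of a complement segment that also hits all sets, whose $y$-coordinate lies in $Y$ by construction. Hence it suffices to decide, at each event $x_j$, whether some candidate $(x_j, y_t)$ hits all sets. The bridge to $\Gamma$ is the loop invariant that, whenever $h$ lies in the slab strictly between $x_{j-1}$ and $x_j$, the attribute $\eta$ at the leaf of key $y_t$ equals the number of complement segments in $\calL$ that cover the point of $y$-coordinate $y_t$ on the current intersection segment. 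Granting this invariant, a minimum query returning $0$ is, by Observation~\ref{obs:hitting}, exactly equivalent to the existence of a point on $S_j$ covered by no complement segment, i.e.\ a point hitting every $U_i$; a positive minimum certifies $\cup_{i} l_i = S_j$ and that no candidate on $S_j$ works.

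The heart of the argument, and the step I expect to be the main obstacle, is showing that the incremental updates preserve this invariant across the whole sweep rather than recomputing coverage from scratch. Each $U_i$ contributes $O(1)$ distinct complement segments with fixed $y$-endpoints, and its contribution to the coverage of $h$ changes only when $h$ crosses a vertical side of one of its rectangles, that is, at an index in $I_{in}$ or $I_{out}$. I would verify case by case that at event $x_j$ the algorithm decrements $\eta$ precisely over the $y$-range of the outgoing complement segment $l'_t$ (valid just left of $x_j$) and increments over the $y$-range of the incoming segment $l_t$ (valid on $S_j$), and symmetrically processes the sets indexed by $I^j_{out}$ to pass from $\Gamma(x_j)$ to the status $\Gamma(x')$ valid for $x_j < x' < x_{j+1}$. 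Combined with the already-established correctness of lazy range-addition and of the root-maintained minimum, this confirms that $\eta$ stays an exact coverage count throughout, closing the correctness half.

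For the running time, I would sum the per-event costs. The preprocessing that produces $X$, $Y$, $I_{in}$, and $I_{out}$ is dominated by sorting and runs in $O(n\log n)$, and $\Gamma$ is built in $O(n)$ time once $Y$ is available. Each range-addition costs $O(\log n)$ and each minimum query $O(1)$, so the event at $x_j$ runs in $O((|I^j_{in}| + |I^j_{out}|)\log n)$. Since $\sum_{t=1}^{N} |I^t_{in}| = O(n)$ and $\sum_{t=1}^{N} |I^t_{out}| = O(n)$, the total over all events is $O(n\log n)$, which together with the preprocessing yields the claimed bound and completes the proof.
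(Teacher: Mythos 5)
Your proposal is correct and follows essentially the same route as the paper: the paper's own justification consists of its algorithm description together with Observations~\ref{obs:hitting} and~\ref{obs:endpointhitting} (reducing the search to endpoints of complement segments, i.e., candidates with coordinates in $X\times Y$), the per-event updates via $I^j_{in}$ and $I^j_{out}$, and the same summation $\sum_j(|I^j_{in}|+|I^j_{out}|)\log n = O(n\log n)$ including preprocessing. Your only addition is to state explicitly the loop invariant on the attributes $\eta$ of $\Gamma$ that the paper leaves implicit in its event-handling description, which is a presentational refinement rather than a different argument.
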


Notice that if there is a point on $S_j$ for some $1\leq j\leq N$ hitting all sets $U_i$, then such a point on $S_j$ can be found in $O(n\log n)$ time. Recall that to decide whether sets $U_i$ can be hit by a point of $B$, 
it is equivalent to determine whether the measure of the union of complement polygons 
of sets $U_i$ in box $B$ is equal to $B$'s. Thus, our algorithm can be adapted to solve this union-measure problem. 

\subsubsection{Wrapping things up}
Recall that any local feasibility test, which is to determine for two given edges $e_1,e_2$ of $G$ whether two points $q_1\in e_1$ and $q_2\in e_2$ exist so that $\max_{i=1}^{k}\phi_i(q_1,q_2)\leq\lambda$, can be reduced to an instance of the above geometric piercing problem. If a point is returned (which hits all sets $U_i$), then $\lambda$ is feasible and $q_1$ (resp., $q_2$) is at the perpendicular projection of that point onto $e_1$ (resp., $e_2$) in the $x$-axis (resp., $y$-axis). Otherwise, $\lambda$ is infeasible w.r.t. edges $e_1,e_2$. Because computing the rectangle set $U_i$ for each pair $P_i\in\calP$ (i.e., the reduction) can be done in $O(n)$ time with the provided distance matrix of $G$. 
We have the following lemma. 

\begin{lemma}
    The local feasibility test can be solved in $O(n\log n)$ time. 
\end{lemma}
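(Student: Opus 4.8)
The plan is to establish the lemma by a straightforward composition of the problem reduction and the piercing algorithm already developed, so that almost all of the real work has been discharged beforehand. First I would invoke the reduction: given the two edges $e_1$ and $e_2$ together with the value $\lambda$, I compute for each pair $P_i=(v_i,u_i)$ of $\calP$ its rectangle set $U_i$ by determining the feasible intervals $I_1(\cdot,\cdot)$ and $I_2(\cdot,\cdot)$ of $v_i$ and $u_i$ on both $e_1$ and $e_2$ and intersecting the corresponding vertical and horizontal slabs inside the box $B$. Since each feasible interval is obtained in $O(1)$ time from the provided distance matrix and each $U_i$ contains only $O(1)$ rectangles, building all $k=O(n)$ sets takes $O(n)$ time.

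Next I would feed the collection $\{U_1,\ldots,U_k\}$ inside $B$ to the geometric piercing algorithm of Lemma~\ref{lem:piercingresult}, which decides in $O(n\log n)$ time whether some point of $B$ hits every $U_i$. By the correctness of the reduction, such a piercing point exists iff there are two points $q_1\in e_1$ and $q_2\in e_2$ with $\max_{i=1}^{k}\phi_i(q_1,q_2)\leq\lambda$; hence the piercing test directly answers the local feasibility test. If a piercing point $p$ is returned, I recover the two centers by projection, taking $q_1$ to be the point of $e_1$ at $x$-coordinate $x(p)$ and $q_2$ the point of $e_2$ at $y$-coordinate $y(p)$; otherwise no feasible pair exists on $(e_1,e_2)$. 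Summing the two phases gives $O(n)+O(n\log n)=O(n\log n)$ time, which is the claimed bound.

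The main conceptual effort has already been expended in the earlier reduction and in Lemma~\ref{lem:piercingresult}, so the only point that needs care here is verifying that both directions of the reduction's correctness (piercing if and only if existence of feasible $q_1,q_2$) hold simultaneously for the two possible assignments of each pair and for the degenerate situations in which a vertex has no feasible interval on an edge. In that degenerate case the corresponding $U_i$ is empty, so no point can hit it and the piercing instance is correctly reported infeasible, matching the earlier observation that $\lambda$ must then be infeasible with respect to $e_1$ and $e_2$.
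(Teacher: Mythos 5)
Your proposal is correct and takes essentially the same approach as the paper: an $O(n)$-time reduction that builds each rectangle set $U_i$ from the feasible intervals (each computable in $O(1)$ time via the distance matrix), followed by one invocation of the $O(n\log n)$-time piercing algorithm of Lemma~\ref{lem:piercingresult}, with $q_1$ and $q_2$ recovered by projecting the returned piercing point onto $e_1$ and $e_2$. Your handling of the degenerate case (an empty $U_i$ forcing infeasibility w.r.t.\ $e_1,e_2$) matches the paper's corresponding remark in the reduction.
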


To decide the feasibility of $\lambda$, we perform a local feasibility test for every pair of edges in $E$ 
including pairs each containing same edges. If $\lambda$ is infeasible w.r.t. every pair of edges, then $\lambda$ is infeasible for the problem. Thus, the running time of the feasibility test is $O(m^2n\log n)$. 
We finish the proof of Lemma~\ref{lem:graphfeasibility}. 

\section{The problem on a tree}\label{sec:tree}
In this section, we present a faster algorithm for the problem on 
a vertex-weighted tree graph, and then give a linear-time result for 
the unweighted tree version. 

In the following, we use $T$ to represent the given tree. 
Assume that the distance matrix of $T$ is not given. 
To realize a constant-time computation for distance, as preprocessing, we construct in 
linear time the range-minima data structure~\cite{ref:BenderTh00,ref:BenderTh04} for $T$ so that given any two vertices, their lowest common ancestor can be known in constant time. Then, we compute and maintain the distance between the root and every vertex of $T$. 
With these information, the distance between any two points of $T$ can be 
obtained in $O(1)$ time. 

We first present our feasibility test that is to determine for any given $\lambda$, 
whether there exist two points $q_1$ and $q_2$ on $T$ with 
$\max_{i=1}^{k}\phi_i(q_1,q_2)\leq \lambda$. 

\subsection{The feasibility test}\label{subsec:treefeasibility}
For any point $q$ of $T$, removing $q$ from $T$ generates several disjoint subtrees and each of them is a \textit{split} subtree of $q$. 
Additionally, $q$ and each of its split subtrees induces a \textit{hanging} 
subtree of $q$ where $q$ is considered as a virtual point. The terminologies are illustrated in Fig.~\ref{fig:splitsubtree}.

Our algorithm traverses $T$ to place two centers $q_1$ and $q_2$ such that 
$q_1$ (resp., $q_2$) covers all vertices in a hanging subtree of $q_1$ 
(resp., $q_2$) which cannot be covered by any point out of this subtree, where the two hanging subtrees of $q_1$ and $q_2$ have no common vertices. 
Our proof shows that $\lambda$ is feasible iff such two centers lead an 
objective value no more than $\lambda$ . 

Set any leaf of $T$ as its root. At first, we attach an attribute $z(u)$ with 
each vertex $u$ where $z(u)$ is the distance of the furthest point to $u$ 
not in subtree $T_u$ that covers (all vertices of) $T_u$, 
and initialize it as zero. 
Then we traverse $T$ in the post-order from its root to place the first center as follows. 

For each vertex $u$, assuming $u$'s children are $c_1, \cdots, c_s$, 
we update $z(u)$ as the minimum of $\lambda/w(u)$ and values
$z(c_i) - l(e(c_i,u))$ for all $1\leq i\leq s$. 
Note that $z(c_i) - l(e(c_i,u))>0$ for each $1\leq i\leq s$ 
since no centers have been placed yet. 
Now, $z(u)$ is the distance of the furthest point to $u$ not 
in $T_u$ that covers $T_u$. 

We then determine whether a center must be placed on edge $e(u,u')$ 
between $u$ and its parent $u'$. Clearly, if $z(u)-l(e(u,u'))>0$ then 
then a point outside of $T_{u'}$ can cover $T_u$ so we continue our traversal; 
otherwise, we place a center on $e(u, u')$ at the point at 
distance $z(u)$ to $u$ to cover $T_{u}$ and most vertices in $T/T_{u}$

Once the first center is placed on $T$, assuming it is $q_1$, we continue our traversal while determining for each vertex $v$ whether it can be covered by $q_1$ instead. If yes, then $\lambda$ is feasible due to $\max_{i=1}^{k}\phi_i(q_1,q_1)\leq\lambda$. 
Otherwise, at least two centers are needed to cover all vertices of $T$ under $\lambda$. 

Suppose $q_1$ is on the edge $e(\alpha_1, \beta_1)$ 
so that $q_1$ is known to cover $T_{\alpha_1}$ (rather than $T/T_{\alpha_1}$). 
Proceed to traverse subtree $T/T_{\alpha_1}$ starting from its root $\beta_1$ to place the second center $q_2$ as the above. Clearly, the running time for placing $q_1$ and $q_2$ is $O(n)$. 

Suppose $q_2$ is on the edge $e(\alpha_2, \beta_2)$ and it is known to cover $T_{\alpha_2}$. 
Now we visit each pair $P_i\in\calP$ to determine whether 
$\max_{i=1}^{k}\phi_i(q_1,q_2)\leq\lambda$. If yes, $\lambda$ is feasible. 
Otherwise, $\max_{i=1}^{k}\phi_i(q_1,q_2)>\lambda$, and by the following lemma, $\lambda$ is infeasible. 

\begin{observation}\label{obs:decisioncorrectness}
    If $\max_{i=1}^{n}\phi_i(q_1, q_2)>\lambda$, then $\lambda$ is infeasible. 
\end{observation}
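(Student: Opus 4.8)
The plan is to prove the contrapositive: assuming $\lambda$ is feasible, I will show that the greedily placed centers already satisfy $\max_i\phi_i(q_1,q_2)\le\lambda$. For a point $p$ of $T$ let $C(p)=\{v\in V: w(v)d(v,p)\le\lambda\}$ be the set of vertices it covers. The key simplification is that, with the two centers \emph{fixed}, feasibility decouples across pairs: since the pairs of $\calP$ are vertex-disjoint, $\max_i\phi_i(q_1,q_2)\le\lambda$ holds iff every pair $P_i$ has one vertex in $C(q_1)$ and the other in $C(q_2)$. Hence it suffices to produce, from the assumed feasible solution, two centers $p_1,p_2$ that are \emph{dominated} by the greedy ones, namely $C(p_1)\subseteq C(q_1)$ and $C(p_2)\subseteq C(q_2)$ after matching labels; any per-pair split witnessing $\phi_i(p_1,p_2)\le\lambda$ then also witnesses $\phi_i(q_1,q_2)\le\lambda$.

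I would first isolate two facts about the forced subtree $A_1=T_{\alpha_1}$, recalling that greedy put $q_1$ on edge $e(\alpha_1,\beta_1)$ at distance $z(\alpha_1)$ from $\alpha_1$ toward $\beta_1$, with $z(\alpha_1)\le l(e(\alpha_1,\beta_1))$. Let $R_1$ be the region from which $A_1$ is coverable, i.e.\ $A_1$ together with the sub-segment of $e(\alpha_1,\beta_1)$ from $\alpha_1$ up to $q_1$. \textbf{(i) Extremal domination.} For every point $p\in R_1$ I claim $C(p)\subseteq C(q_1)$: by the definition of $z(\alpha_1)$ the point $q_1$ covers all of $A_1$, so $C(p)\cap A_1\subseteq A_1\subseteq C(q_1)$; and for any vertex $w\notin A_1$, whose path into $A_1$ enters through $\beta_1$, a direct computation gives $d(q_1,w)=\bigl(l(e(\alpha_1,\beta_1))-z(\alpha_1)\bigr)+d(\beta_1,w)\le d(p,w)$, because $q_1$ is at least as close to $\beta_1$ as $p$. \textbf{(ii) A center lies in $R_1$.} Since every vertex of $A_1$ reaches the outside only through $\alpha_1$, if both feasible centers lay outside $R_1$ (hence at distance $>z(\alpha_1)$ from $\alpha_1$), then the one nearer $\alpha_1$ would, by the choice of $z(\alpha_1)$, already fail to cover some vertex $v^{\ast}\in A_1$; funneling through $\alpha_1$, the farther center fails $v^{\ast}$ too, so $v^{\ast}$ is uncovered, contradicting feasibility.

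I would then apply (i)--(ii) to the two disjoint forced subtrees $A_1=T_{\alpha_1}$ and $A_2=T_{\alpha_2}$ produced by the algorithm, both nonempty since the second center was actually placed, with coverable regions $R_1,R_2$; note $R_1\cap R_2=\emptyset$ because $A_2\subseteq T/A_1$ avoids $A_1$ and its attachment edge. The same funneling argument---now observing that the $A_1$-side of $A_2$ is even farther than its $\beta_2$-side---shows (ii) for $A_2$ relative to the whole tree. Thus each of $R_1,R_2$ contains at least one feasible center, and as they are disjoint and only two centers exist, exactly one lies in each. Labeling $p_1\in R_1$ and $p_2\in R_2$, fact (i) and its verbatim analogue for $q_2$ yield $C(p_1)\subseteq C(q_1)$ and $C(p_2)\subseteq C(q_2)$; by the per-pair decoupling this forces $\max_i\phi_i(q_1,q_2)\le\lambda$, contradicting the hypothesis, so $\lambda$ is infeasible.

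The step I expect to be the crux is (ii): ruling out that a forced subtree is \emph{split} across the two feasible centers. Everything there rests on the monotonicity of distances into a hanging subtree through its attachment vertex $\alpha_j$, which funnels all outside coverage through a single gateway and lets the nearer center dominate the farther one on that subtree. Secondary care is needed to state the forced property of $A_2$ with respect to the entire tree rather than only $T/A_1$, and to verify the degenerate positions (a center or $q_j$ coinciding with $\beta_j$, i.e.\ $z(\alpha_j)=l(e(\alpha_j,\beta_j))$), where the inequalities in (i) remain valid but become equalities.
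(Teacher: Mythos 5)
Your proof is correct, but it is organized genuinely differently from the paper's. The paper argues directly: it picks a violating pair $P_{i'}$ with $\phi_{i'}(q_1,q_2)>\lambda$ and does a case analysis on where its vertices lie --- both in the same forced subtree $T_{\alpha_1}$ or $T_{\alpha_2}$, or one of them in $T/(T_{\alpha_1}\cup T_{\alpha_2})$ --- deriving infeasibility in each case from short, largely unjustified claims (``no point of $T$ can cover both $v'$ and $u_{i'}$,'' ``at least three centers are needed''). You instead prove the contrapositive by a uniform domination/exchange argument: your fact (ii) shows the vertex realizing $z(\alpha_j)$ is coverable only from within $R_j$, so a feasible solution must put one center in each of the disjoint regions $R_1,R_2$; your fact (i) shows the greedy center dominates every point of its region coverage-wise, so $C(p_j)\subseteq C(q_j)$; and the vertex-disjointness of the pairs lets every feasible split transfer verbatim to $(q_1,q_2)$. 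In effect, (i)--(ii) are precisely the rigorous content behind the paper's two terse claims, so your write-up subsumes the paper's cases while eliminating the case analysis; what the paper's version buys is brevity, and what yours buys is that the extremal property of the greedy placement is isolated once and reused. One nuance deserves a sentence in your treatment of $A_2$: the paper's second sweep treats vertices already covered by $q_1$ as unconstrained (``covered by $q_1$ instead''), so $z(\alpha_2)$ may be computed only over the not-yet-covered vertices of $T_{\alpha_2}$. Your claim that $q_2$ covers all of $A_2$ (needed for $C(p_2)\subseteq C(q_2)$) still holds, because the path from any $q_1$-covered vertex of $T_{\alpha_2}$ to $q_1$ passes through $q_2$, whence $d(v,q_2)\le d(v,q_1)$; and your forcing step (ii) goes through with $v^{\ast\ast}$ taken to be the binding uncovered vertex. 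This semantics also resolves your flagged degeneracy favorably: a binding uncovered vertex with $z(\alpha_2)=l(e(\alpha_2,\beta_2))$ and $\beta_2=\beta_1$ would be covered by $q_1=\beta_1$, a contradiction, so $R_1\cap R_2=\emptyset$ holds strictly and the single-point overlap you worried about cannot occur.
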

\begin{proof}
    By the definition, there must be a pair, say $P_{i'}=(u_{i'},v_{i'})$, 
    so that $\phi_{i'}(q_1,q_2)>\lambda$. 
    Based on the locations of $u_{i'}$ and $v_{i'}$, we prove the truth of the statement 
    for each case as follows. Note that it is not likely that $u_{i'}\in T_{\alpha_1}$ 
    and $v_{i'}\in T_{\alpha_2}$ since otherwise, $\phi_{i'}(q_1,q_2)\leq\lambda$. 

    On the one hand, $u_{i'}$ and $v_{i'}$ are in the same subtree of $T_{\alpha_1}$ and $T_{\alpha_2}$. Recall that $T_{\alpha_1}\cap T_{\alpha_2} = \emptyset$. Suppose they belong to $T_{\alpha_1}$. Then, $q_2$ covers neither of the two vertices. Let $v'$ be any vertex of $T_{\alpha_2}$ that determines $q_2$. Clearly, no point of $T$ can cover both $v'$ and $u_{i'}$ (resp., $v_{i'}$) under $\lambda$. 
    Hence, $\lambda$ is infeasible. The proof is similar for the case where 
    $u_{i'},v_{i'}\in T_{\alpha_2}$.
       
    On the other hand, either $u_{i'}$ or $v_{i'}$ is in subtree $T/(T_{\alpha_1}\cup T_{\alpha_2})$. 
    Due to $\phi_i(q_1,q_2)>\lambda$, the one of them that is in $T/(T_{\alpha_1}\cup T_{\alpha_2})$ can be covered by neither of $q_1$ and $q_2$ under $\lambda$. 
    So, at least three centers are needed to cover all vertices under $\lambda$. 
    It implies that $\lambda$ is infeasible.  
    
    Thus, the observation holds. \qed
\end{proof}

Recall that such two centers can be found in $O(n)$ time; the objective value w.r.t. them can be computed in linear time. Thus, the following lemma is derived.  

\begin{lemma}\label{lem:treefeasibility}
    The feasibility test can be done in $O(n)$ time. 
\end{lemma}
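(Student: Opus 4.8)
The plan is to prove the lemma by splitting the feasibility test into three phases, each of which I will bound by $O(n)$, and to settle correctness by combining a trivial forward direction with the already-established Observation~\ref{obs:decisioncorrectness}. The three phases are: a preprocessing step that enables constant-time distance queries; the greedy post-order traversal(s) that place $q_1$ and $q_2$; and a final scan over all pairs that evaluates $\max_{i}\phi_i(q_1,q_2)$. Since the test reports $\lambda$ feasible exactly when this maximum does not exceed $\lambda$ (or when $q_1$ alone already covers $T$), correctness reduces to two statements: if the greedy centers attain a value at most $\lambda$ then $\lambda$ is feasible by definition, and, conversely, if they attain a value exceeding $\lambda$ then by Observation~\ref{obs:decisioncorrectness} no placement can do better, so $\lambda$ is infeasible.

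First I would dispatch the preprocessing. Rooting $T$ at a leaf, I would build the range-minima structure of~\cite{ref:BenderTh00,ref:BenderTh04} so that any lowest common ancestor is reported in $O(1)$ time, and record $d(\text{root},v)$ for every vertex $v$; both take $O(n)$ time. With these, the distance $d(p_1,p_2)$ between any two points of $T$ is computable in $O(1)$ time, which underlies every subsequent constant-time claim.

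Next I would bound the placement phase. The post-order traversal visits each vertex $u$ once and sets $z(u)$ to the minimum of $\lambda/w(u)$ and the values $z(c_i)-l(e(c_i,u))$ over its children; charging the processing of each child to the connecting edge, and since a tree has $O(n)$ edges, the traversal costs $O(n)$. Every test of the form $z(u)-l(e(u,u'))>0$ and every center placement is $O(1)$, so $q_1$ is produced in $O(n)$ time; the subsequent check of whether $q_1$ covers all of $T$ and the second traversal on $T/T_{\alpha_1}$ that produces $q_2$ are identical single passes, hence also $O(n)$. Finally, the pair scan evaluates each of the $k=O(n)$ quantities $\phi_i(q_1,q_2)=\min\{\max\{w(v_i)d(v_i,q_1),w(u_i)d(u_i,q_2)\},\max\{w(u_i)d(u_i,q_1),w(v_i)d(v_i,q_2)\}\}$ with four constant-time distance queries, for $O(n)$ in total. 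Summing the three phases yields the claimed $O(n)$ bound.

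I expect the conceptual crux, rather than the time arithmetic, to be the main obstacle: justifying the greedy invariant that makes the test meaningful. Concretely, I would prove by post-order induction that, right after $u$ is processed, $z(u)$ is exactly the largest distance from $u$ (measured toward the root) at which an external center still covers all of $T_u$ under $\lambda$, and that deferring a placement precisely until $z(u)-l(e(u,u'))\le 0$ keeps each center as close to the root as possible. This is what guarantees that the hanging subtrees $T_{\alpha_1}$ and $T_{\alpha_2}$ are disjoint and that $q_1$ (resp.\ $q_2$) covers $T_{\alpha_1}$ (resp.\ $T_{\alpha_2}$) --- precisely the hypotheses on which Observation~\ref{obs:decisioncorrectness} depends. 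Once that invariant is in hand, both the correctness of the decision and the $O(n)$ running time follow immediately.
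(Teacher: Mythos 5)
Your proposal follows essentially the same route as the paper: the identical preprocessing (LCA structure plus root distances for $O(1)$ distance queries), the same greedy post-order placement of $q_1$ and $q_2$ via the attribute $z(u)=\min\{\lambda/w(u),\,z(c_i)-l(e(c_i,u))\}$, the same final $O(n)$ scan of pairs, and the same appeal to Observation~\ref{obs:decisioncorrectness} for the infeasibility direction. Your explicit inductive statement of the greedy invariant (that $z(u)$ is the farthest external covering distance, and that deferred placement keeps $T_{\alpha_1}$, $T_{\alpha_2}$ disjoint with each center covering its subtree) merely makes precise what the paper asserts informally, so the argument and the $O(n)$ accounting coincide with the paper's proof.
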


\subsection{The algorithm}\label{subsec:treealgorithm}
Our algorithm consists of two phases: In the first phase, we find the two edges 
$e(\alpha_1^*, \beta_1^*)$ and $e(\alpha_2^*, \beta_2^*)$ that respectively 
consist of $q^*_1$ and $q^*_2$. In the second phase, we compute $\lambda^*$ 
as well as centers $q^*_1$ and $q^*_2$. 

\paragraph{Phase 1.} Compute the \textit{centroid} $c$ of $T$ in linear time, 
which is a node of $T$ so that its hanging subtrees each with size no more than 
$\frac{2}{3}$ of $T$'s. Apply Lemma~\ref{lem:binarysearch} to $c$ to find in linear time 
the hanging subtrees of $c$ that contains $q^*_1$ and $q^*_2$, respectively. The proof of 
Lemma~\ref{lem:binarysearch} is given below. We recursively apply 
Lemma~\ref{lem:binarysearch} to the subtree containing $q^*_1$ (resp., $q^*_2$) 
to find edge $e(\alpha_1^*, \beta_1^*)$ (resp., $e(\alpha_2^*, \beta_2^*)$). 
Clearly, the two edges can be obtained after $O(\log n)$ recursive steps; 
so the running time is $O(n\log n)$.  

\begin{lemma}\label{lem:binarysearch}
    Given any vertex $u$ of $T$, we can find its hanging subtrees respectively containing $q^*_1$ and $q^*_2$ in $O(n)$ time. 
\end{lemma}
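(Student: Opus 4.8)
The plan is to root $T$ at $u$ and decide, for each optimal center, which of the \emph{hanging} subtrees $H_1,\dots,H_s$ of $u$ contains it; after rooting at $u$ these are exactly $H_j=T_{c_j}\cup\{u\}$ for the children $c_1,\dots,c_s$ of $u$, with $u$ the virtual point. First I would dispose of the degenerate cases in which a center coincides with $u$: if a single center placed at $u$ already covers every vertex under $\lambda^*$ then $q^*_1=q^*_2=u$, and if exactly one center sits at $u$ then $u$ (as a virtual point) is the answer for that center. Both situations are detected by running the feasibility test of Lemma~\ref{lem:treefeasibility} and inspecting where it is forced to place a center, so from here on I may assume each of $q^*_1,q^*_2$ lies strictly inside some hanging subtree.

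The core mechanism is to use the feasibility test as the only nontrivial engine. The key observation is that running the feasibility test at $\lambda=\lambda^*$ places two \emph{optimal} centers, and the hanging subtree of $u$ containing each is then identified by a single $O(n)$ scan that checks which subtree of $u$ the placed edge belongs to. Since $\lambda^*$ is unknown during Phase~1, I would convert the question ``which hanging subtree contains $q^*_i$?'' into a constant number of feasibility queries at candidate values that are \emph{local} to $u$. By Observation~\ref{obs:treesetoflambda} every candidate value of $\lambda^*$ is a balance $w(v)d(v,x)=w(v')d(v',x)$; at $u$ the only ones I need are the weighted demands $w(v)d(v,u)$ seen from $u$ together with the $O(1)$ balance points among the largest such demands. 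All demands, and for each child $c_j$ the farthest weighted vertex of $T_{c_j}$ from $u$ and the pairs lying entirely inside $T_{c_j}$, are computed in one post-order traversal in $O(n)$ time.

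With these candidates in hand, I would argue that the direction of each center is governed by only the $O(1)$ largest demands and their balances: the center responsible for the dominant demand is pulled off $u$ into a subtree precisely when $\lambda^*$ falls below that demand, and whether it penetrates a given $H_a$ is settled by comparing $\lambda^*$ against a constant number of precomputed values. Each such comparison ``is $\lambda^*$ above or below this candidate?'' is resolved by one call to the feasibility test of Lemma~\ref{lem:treefeasibility}, which runs in $O(n)$ time, so a total of $O(1)$ tests plus the $O(n)$ traversal and scans gives the claimed $O(n)$ bound. Correctness of each read-off rests on Observation~\ref{obs:decisioncorrectness}: a placement that fails to cover some deep or straddled vertex certifies infeasibility and hence rules out that direction.

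The hardest part will be the bichromatic coupling between the two centers across the subtrees of $u$. Unlike the ordinary tree two-center problem, a pair $P_i$ whose two vertices lie in different hanging subtrees of $u$ forces one endpoint to $q^*_1$ and the other to $q^*_2$, so the two directions are not independent and the naive ``move the center toward its farthest vertex'' rule can be defeated when a dominant demand and its pair-partner \emph{straddle} $u$. I expect the main work to be a case analysis showing that, even in the presence of straddling pairs, the decisive candidate values remain among the $O(1)$ largest demands at $u$, and that a constant number of feasibility tests still certifies a consistent color assignment; the feasibility test is exactly the tool that validates such an assignment, since by Observation~\ref{obs:decisioncorrectness} an inconsistent one is reported infeasible. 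Granting this case analysis, the procedure costs $O(n)$ preprocessing plus $O(1)$ feasibility tests, i.e.\ $O(n)$ time in total.
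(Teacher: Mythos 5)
Your proposal has the right skeleton and it matches the paper's strategy: compute for each hanging subtree $T_i$ of $u$ the maximum weighted demand $\tau_i=\max_{v\in T_i}w(v)d(v,u)$ in one $O(n)$ traversal, argue that only the $O(1)$ largest demands can govern where the centers go, detect certain critical pairs, and resolve the remaining ambiguity with $O(1)$ calls to the linear-time feasibility test. But there is a genuine gap: the entire content of the lemma is the case analysis you explicitly defer (``granting this case analysis \ldots''). The paper's proof \emph{is} that analysis. It first proves a pruning claim --- no center lies in any $T_i$ with $\tau_i<\tau_2$, because such a center either serves no vertex of $T_1\cup T_2$ (whence $\lambda^*\geq\tau_2$ and the center can be replaced by $u$) or it does serve one and can be moved to $u$ without increasing the objective --- and then runs through four cases on the order pattern of $\tau_1,\tau_2,\tau_3$, in which the \emph{only} feasibility test ever needed is at $\lambda=\tau_2$; equality cases like $\tau_1=\tau_2=\tau_3$ are decided directly, with both centers placed at $u$. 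Your vaguer plan of testing ``balance points among the largest demands'' neither identifies which candidate values suffice nor bounds their number, so the claimed $O(1)$ tests is asserted, not derived. (There is also a circularity in your degenerate-case handling: you cannot run the feasibility test at $\lambda=\lambda^*$ during Phase~1, since $\lambda^*$ is exactly what is unknown.)

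Moreover, your diagnosis of the ``hardest part'' is misdirected. A pair straddling $u$ (one vertex in $T_a$, the other in $T_b$, $a\neq b$) is the benign configuration: its endpoints split naturally across the two centers. The decisive obstruction, which the paper's cases $\tau_1=\tau_2>\tau_3$ and $\tau_1>\tau_2>\tau_3$ isolate, is a pair whose \emph{both} vertices lie in the \emph{same} top subtree ($T_1$ or $T_2$) with minimum weighted distance to $u$ equal to $\tau_1$ (resp.\ $\tau_2$). Such a pair forces one of its endpoints to share a center with the far vertices of the other top subtree, which pins $\lambda^*=\tau_1$ (or forces $\lambda^*>\tau_2$, placing both centers in $T_1$) and collapses the search; detecting such pairs takes one linear scan. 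Without singling out this configuration, your claim that ``the decisive candidate values remain among the $O(1)$ largest demands'' has no supporting argument, and the proposal does not establish correctness of any read-off rule. So the approach is the right one, but as written it is a plan for a proof rather than a proof: you would need to supply essentially the paper's claim-plus-four-cases analysis to close it.
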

\begin{proof}
    Let $T_1, T_2, \cdots, T_s$ represent $u$'s hanging subtrees. 
    Note that $u$ is a virtual vertex in each $T_i$. 
    Define $\tau_i$ as $\max_{v\in T_i}w(v)d(v,u)$ for each $1\leq i\leq s$. 
    Without loss of generality, suppose $\tau_1\geq\tau_2\geq\cdots\geq\tau_s$. 
    Clearly, $\lambda^*\leq\tau_1$.  
    
    We claim that any $T_i$ with $\tau_i<\tau_2$ does not contain centers. 
    Suppose $T_i$ is such a subtree but contains a center. 
    If no vertex in $T_1\cup T_2$ is assigned to the center in $T_i$, then $\lambda^*\geq\tau_2$; 
    if so, this center in $T_i$ can be moved to $u$ without increasing the objective value. 
    Hence, the claim is true. 
    
    By the claim, we consider only the following cases for locating $q^*_1$ and $q^*_2$. 

    \begin{itemize}
        \item $\tau_1 >\tau_2 = \tau_3$. Clearly, a center, say $q^*_1$, must be placed in $T_1$ 
        in order to achieve an objective value smaller than $\tau_1$. We claim $\lambda^*\geq\lambda_2$. For any subtree $T_i$ with $\tau_i=\tau_2$, 
        if one of its vertices that determine $\tau_i$ is assigned to $q^*_1$, then $\lambda^*>\tau_2$. 
        Otherwise, vertices in each subtree $T_i$ with $\tau_i=\tau_2$ that determine $\tau_i$ 
        are all assigned to $q^*_2$, so $\lambda^*\geq\tau_2$. So, the claim is true. 
        
        Clearly, $q^*_2$ must be placed in $T_1\cup\{u\}$ in order to minimize its maximum 
        distance to $T_1$'s vertices. Note that by applying the feasibility test to $\lambda=\tau_2$, 
        we can determine whether $\lambda^* = \tau_2$ or not in linear time. 
        
        \item $\tau_1 = \tau_2 = \tau_3$: If a vertex in $T_i$ with $\tau_i = \tau_1$ that determines 
        $\tau_i$ is covered by a center not at $u$, then the objective value must be 
        larger than $\tau_1$. Clearly, to minimize the objective value, 
        two centers can be placed at $u$, which means $\lambda^* =\tau_1$.  
       
        \item $\tau_1 = \tau_2 > \tau_3$: Suppose there is a pair where its vertices are all in $T_1$ (resp., $T_2$), 
        and their minimum (weighted) distance to $u$ is $\tau_1$. Then, we have 
        $\lambda^* = \tau_1$. It is because a vertex in such a pair must be covered 
        by the same center together with a vertex in $T_2$ (resp., $T_1$) that determines $\tau_2$ 
        (resp., $\tau_1$). Thus, both centers can be placed at $u$. Clearly, the existence of such a pair can be determined in linear time.
        
        In general, there is no such pairs. In order to achieve a smaller objective 
        value than $\tau_1$, $q^*_1$ (resp., $q^*_2$) must be placed in $T_1$ (resp., $T_2$) and 
        the vertices of $T_1$ (resp., $T_2$) leading $\tau_1$ (resp., $\tau_2$) 
        must be assigned to $q^*_1$ (resp., $q^*_2$).  
        
        \item $\tau_1 > \tau_2 > \tau_3$: By the above claim, no centers are in $T_i$ with $i>2$. 
        If there is a pair where its vertices are in $T_1$ and their minimum (weighted) distance 
        to $u$ equals $\tau_1$, then both centers are in $T_1$ due to $\lambda^*\leq\tau_1$. 
        Otherwise, no such pairs exist, so at least one center, say $q^*_1$, must be placed 
        in $T_1$ to cover at least those vertices of $T_1$ determining $\tau_1$.  
       
        Moreover, if $T_2$ contains a pair where the minimum (weighted) distance 
        of its vertices to $u$ equals $\tau_2$, then $\lambda^*>\tau_2$ 
        since one of its vertices is covered by $q_1^*$. If so, both centers can be placed in $T_1$ 
        to minimize the maximum (weighted) distance of vertices in $T_1$ to $q^*_2$. 
        
        Otherwise, no such pairs are in $T_2$. A feasibility test is then 
        applied to $\tau_2$. If $\lambda^*\leq\tau_2$ then $q^*_2$ is 
        in $T_2\cup \{u\}$. Otherwise, $\lambda^*>\tau_2$, so both centers 
        are in $T_1$. 
    \end{itemize}

    Consequently, for each above case, the hanging subtrees of $u$ respectively containing $q^*_1$ and $q^*_2$ can be determined in linear time. Additionally, values $\tau_i$ for all $1\leq i\leq s$ 
    can be determined in $O(n)$ time by traversing each hanging subtree of $u$. Thus, the lemma holds. \qed
\end{proof}

\paragraph{Phase 2.} We form $y = w(u)d(u,x)$ for each vertex $u$ respectively with respect to $x\in e(\alpha_1^*, \beta_1^*)$ and $x\in e(\alpha_2^*, \beta_2^*)$ in the $x,y$-coordinate plane, which can be done in $O(n)$ time. 
As stated in Section~\ref{sec:preliminary}, $\lambda^*$ belongs to the set of $y$-coordinates 
of intersections between the $O(n)$ distance functions. With the assistance of our feasibility test, Lemma~\ref{lem:linearrangement} can be applied to find that intersection in $O((n+\tau)\log n)$ time, 
where $\tau$ is the time of our feasibility test. Hence, we have the following theorem. 

\begin{theorem}\label{the:weightedtree}
    The bichromatic two-center problem on trees can be solved in $O(n\log n)$ time. 
\end{theorem}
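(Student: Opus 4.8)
The plan is to prove the theorem by assembling the two-phase algorithm described above and bounding the cost of each phase. In the first phase I would locate the two edges $e(\alpha_1^*,\beta_1^*)$ and $e(\alpha_2^*,\beta_2^*)$ containing $q_1^*$ and $q_2^*$ by a centroid-based recursion: compute the centroid $c$ of $T$ in linear time, so that every hanging subtree of $c$ has size at most $\frac{2}{3}$ of $T$'s, and invoke Lemma~\ref{lem:binarysearch} at $c$ to identify which hanging subtree contains $q_1^*$ and which contains $q_2^*$. Recursing into the identified subtree shrinks the search region by a constant factor, so after $O(\log n)$ rounds the two target edges are isolated. Each round applies Lemma~\ref{lem:binarysearch} at the current centroid, and because the quantities it needs are maxima over hanging subtrees that together span all of $T$ (in particular the portion lying outside the current search region), each such call costs $O(n)$; hence the first phase runs in $O(n\log n)$ time.

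The correctness of this recursion rests entirely on Lemma~\ref{lem:binarysearch}, which I expect to be the main obstacle. Its proof hinges on the quantities $\tau_i=\max_{v\in T_i}w(v)d(v,u)$ over the hanging subtrees $T_1,\dots,T_s$ of $u$, sorted so that $\tau_1\geq\cdots\geq\tau_s$. Two structural facts drive the argument: $\lambda^*\leq\tau_1$, and no subtree with $\tau_i<\tau_2$ can host a center, since such a center could be slid to $u$ without raising the objective. The delicate part is the exhaustive case analysis on the ties among $\tau_1,\tau_2,\tau_3$ and on whether some pair has both endpoints inside $T_1$ (or $T_2$) while realizing $\tau_1$ (or $\tau_2$); the pairing constraint forces the vertices realizing $\tau_1$ and $\tau_2$ into different centers, which pins down the hanging subtrees that must contain $q_1^*$ and $q_2^*$. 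In the borderline cases a single application of the feasibility test at $\lambda=\tau_2$ decides between $\lambda^*=\tau_2$ and $\lambda^*>\tau_2$. I would verify that in each case the claimed placement is both necessary and sufficient, and note that all $\tau_i$ and the required pair tests are obtained in one traversal, yielding the $O(n)$ bound per invocation.

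For the second phase, once the two edges are fixed I would form the distance functions $y=w(u)d(u,x)$ for every vertex $u$ with respect to $x$ on each of the two edges; on a tree edge each such function is linear, and there are $O(n)$ of them in total. By Observation~\ref{obs:treesetoflambda}, $\lambda^*$ is the $y$-coordinate of an intersection of these functions. Extending them to a set of $O(n)$ lines and running the line-arrangement search of Lemma~\ref{lem:linearrangement}, with the $O(n)$-time feasibility test of Lemma~\ref{lem:treefeasibility} as its oracle, recovers $\lambda^*$ and hence $q_1^*$ and $q_2^*$ in $O((n+\tau)\log n)=O(n\log n)$ time, where $\tau=O(n)$. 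Summing the two phases gives the claimed $O(n\log n)$ bound; the only routine point to confirm is that restricting the search to the two edges located in Phase~1 cannot miss the optimum, which holds because Phase~1 identifies the edges that genuinely contain the optimal centers.
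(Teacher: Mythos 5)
Your proposal is correct and follows the paper's proof essentially verbatim: Phase~1 is the same centroid-based recursion invoking Lemma~\ref{lem:binarysearch} for $O(\log n)$ rounds at $O(n)$ per round, and Phase~2 is the same reduction to searching the arrangement of the $O(n)$ distance-function lines on the two located edges via Lemma~\ref{lem:linearrangement} with the $O(n)$-time feasibility test of Lemma~\ref{lem:treefeasibility}. Your sketch of the $\tau_i$-based case analysis accurately reflects the paper's own proof of Lemma~\ref{lem:binarysearch}, so the argument and the $O(n\log n)$ bound match the paper's.
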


\subsection{The unweighted case}\label{subsec:unweightedtree}
In the unweighted case, the weights of vertices in the given tree $T$ are all same. Without loss of generality, we assume that each vertex is of weight one. 

Recall that some vertices of the given tree may not be in any given pair, and for the weighted version, we first preprocess the input graph to reduce the problem into an instance where every vertex of the obtained graph is paired. Instead, for the unweighted tree version, we preprocess the given tree as follows: Traverse the given tree in the post-order to iteratively remove its leaves until its every leave is in a pair of the given set $\calP$. This generates a subtree of $T$ of size no more than $n$ where every leave is in a pair of $\calP$. Clearly, solving the problem on this obtained subtree w.r.t. $\calP$ is equivalent to solving the problem on the given tree w.r.t. $\calP$. With a little abuse of notation, let $T$ represent the obtained subtree. 

Let $q^*$ be the center of $T$ and $\epsilon^*$ be the optimal one-center objective value. As proved in~\cite{ref:HalfinOn74}, $q^*$ is 
the (common) midpoint of the longest path(s) of $T$ and $\epsilon^*$ is the longest path length. Since any longest path of $T$ is between two leaves and every leaf is in a pair of $\calP$, $\lambda^*\leq\epsilon^*$. Additionally, we have the following observation. 

\begin{observation}\label{obs:unweightedcenter}
    $q^*_1$ (resp., $q^*_2$) is at the (common) midpoint of the longest path(s) 
    between two vertices assigned to $q^*_1$ (resp., $q^*_2$); 
    $2\lambda^*$ equals the maximum length of the two longest paths 
    respectively for $q^*_1$ and $q^*_2$. 
\end{observation}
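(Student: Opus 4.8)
The plan is to reduce the statement to the classical one-center characterization on unweighted trees and then to combine the two color classes. Recall that the center of a vertex subset is the point minimizing the maximum distance to that subset (Section~\ref{sec:preliminary}), and that the problem amounts to choosing a red/blue coloring of the paired vertices so as to minimize the larger of the one-center values of the two color classes. Let $R$ be the set of vertices assigned to $q^*_1$ (the red class) and $B$ the set assigned to $q^*_2$ (the blue class); each pair contributes exactly one vertex to each. Since, once the coloring is fixed, the objective splits independently into the maximum red distance and the maximum blue distance, an optimal solution places $q^*_1$ at the one-center of $R$ on $T$ and $q^*_2$ at the one-center of $B$ on $T$, and
\[
\lambda^* \;=\; \max\bigl\{\,\max_{v\in R} d(v,q^*_1),\ \max_{v\in B} d(v,q^*_2)\,\bigr\}.
\]

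Next I would invoke the characterization of~\cite{ref:HalfinOn74}, adapted from the whole tree to an arbitrary vertex subset $V'$: on an unweighted tree the one-center of $V'$ lies at the (common) midpoint of a longest path between two vertices of $V'$, and its one-center value equals half the length of that path. Applying this with $V'=R$ (resp.\ $V'=B$) immediately puts $q^*_1$ (resp.\ $q^*_2$) at the midpoint of the longest path between two vertices assigned to it, which is the first half of the claim. Writing $D_1$ and $D_2$ for the lengths of these two longest paths, the displayed identity becomes $\lambda^*=\max\{D_1/2,\,D_2/2\}$, hence $2\lambda^*=\max\{D_1,D_2\}$, the larger of the two longest-path lengths; this is the second half.

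The only step that needs a genuine tree argument is the subset version of the one-center characterization, since~\cite{ref:HalfinOn74} states it for the whole tree, and this is the part I expect to be the main obstacle (though it is standard). Let $a,b\in V'$ realize the longest path among vertices of $V'$, of length $D$, and let $m$ be its midpoint. For any point $x$ of $T$, uniqueness of tree paths gives $d(x,a)+d(x,b)\ge d(a,b)=D$, so $\max\{d(x,a),d(x,b)\}\ge D/2$; this lower-bounds the one-center value of $V'$ by $D/2$, a bound attained at $m$ since $d(m,a)=d(m,b)=D/2$. It then remains to verify that $m$ covers every other vertex of $V'$ within $D/2$: if some $v\in V'$ had $d(m,v)>D/2$, then, letting $p$ be the point where the path from $v$ to $m$ meets the $a$-$b$ path, a short computation using $d(p,a)+d(p,b)=D$ shows that one of $d(v,a)$ or $d(v,b)$ exceeds $D$, contradicting the maximality of $D$ over pairs of $V'$. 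Hence $m$ is the one-center of $V'$ with value exactly $D/2$, which closes the argument.
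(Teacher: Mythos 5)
Your proof is correct and follows essentially the same route as the paper's: both reduce the observation to the fact that on an unweighted tree the one-center of a vertex subset lies at the (common) midpoint of a longest path between two vertices of that subset, with one-center value equal to half that path's length, applied separately to the two color classes. The only cosmetic difference is packaging --- the paper phrases this via the minimal subtree $T'$ spanning each class (so the subset's longest path becomes a leaf-to-leaf diameter of $T'$) and asserts the midpoint property with a one-line ``clearly,'' whereas your junction-point computation with $d(p,a)+d(p,b)=D$ makes that covering step explicit; substantively the two arguments are identical.
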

\begin{proof}
    Let $T'$ be the minimum subtree of $T$ that contains all vertices assigned to $q^*_1$ 
    in an optimal solution. Every leaf of $T'$ is assigned to $q^*_1$ but an internal node 
    might be assigned to $q^*_2$ or not in any pair of $\calP$. 
    Each longest path of $T'$ is between two leaves; 
    if there are multiple longest paths then these paths must share the same midpoint. 
    Clearly, the maximum distance from vertices assigned to $q^*_1$ to this (common) midpoint 
    is minimized over $T'$ (or $T$) since the distance from an endpoint of a longest path to 
    any other point of $T'$ (or $T$) is larger than half of the longest path. Hence, 
    $q^*_1$ is at this midpoint, and any vertices of $T'$ except for 
    the endpoints (leaves) of its longest paths are not relevant to $q^*_1$. 
    
    The proof for $q^*_2$ is similar. It follows that $2\lambda^*$ equals the maximum length of 
    two longest paths respectively determining $q^*_1$ and $q^*_2$. 
    Thus, the observation holds. \qed 
\end{proof}

We say that a vertex is a \textit{key} vertex of $q^*$ if its distance to $q^*$ 
equals to $\epsilon^*$ (i.e., if it is an endpoint of the longest path of $T$). 
Every key vertex of $q^*$ is a leaf of $T$, and 
$q^*$ has at least two hanging subtrees each of which contains 
at least one of its key vertices. 

As explained in the proof of Lemma~\ref{lem:binarysearch}, $\lambda^* = \epsilon^*$ 
if either of the two conditions satisfies: (1) $q^*$ has more than two hanging subtrees 
such that each of them contains a key vertex of $q^*$; (2) a hanging subtree of $q^*$ 
contains a pair in $\calP$ whose both vertices are key vertices of $q^*$. 
In general, neither of two conditions satisfies so $\lambda^*<\epsilon^*$. 

Denote by $T_1$ and $T_2$ the two hanging subtrees of $q^*$ containing its key vertices. 
Due to $\lambda^*<\epsilon^*$, $T_1$ and $T_2$ each contains a center; 
assume $q^*_1\in T_1$ and $q^*_2\in T_2$. Let $T_1^*$ be the minimum subtree of $T$ 
induced by all vertices assigned to $q^*_1$ in an optimal solution, and 
$T_2^*$ be the minimum subtree by all vertices assigned to $q^*_2$. 
We have the following observation. 

\begin{observation}\label{obs:keyvertex}
   Each longest path of $T_1^*$ (resp., $T_2^*$) ends with a key vertex of $q^*$ in $T_1$ 
   (resp., $T_2$) and each key vertex of $q^*$ in $T_1$ (resp., $T_2$) 
   is an endpoint of a longest path of $T_1^*$ (resp., $T_2^*$). 
\end{observation}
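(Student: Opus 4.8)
The plan is to treat $T_1^*$ only; the argument for $T_2^*$ is symmetric. Write $\delta_1=d(q^*,q^*_1)$ and let $r_1$ be the radius of $T_1^*$, i.e.\ half the length of a longest path of $T_1^*$. By Observation~\ref{obs:unweightedcenter}, $q^*_1$ is the common midpoint of the longest path(s) of $T_1^*$, so $r_1\le\lambda^*<\epsilon^*$. First I would record the standard tree fact that a vertex lies at distance exactly the radius from the center if and only if it is an endpoint of a longest path. With this, Observation~\ref{obs:keyvertex} is equivalent to the set identity $\{v\in T_1^*: d(v,q^*_1)=r_1\}=\{\text{key vertices of }q^*\text{ in }T_1\}$, which I would prove by establishing the two inclusions separately.

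For the inclusion ``key vertex $\Rightarrow$ longest-path endpoint'', I would first show that every key vertex $\kappa$ of $q^*$ in $T_1$ is assigned to $q^*_1$: since $T_1$ and $T_2$ are distinct hanging subtrees of $q^*$ and $q^*_2\in T_2$, the path from $\kappa$ to $q^*_2$ runs through $q^*$, so $d(\kappa,q^*_2)\ge d(\kappa,q^*)=\epsilon^*>\lambda^*$; assigning $\kappa$ to $q^*_2$ would exceed $\lambda^*$, hence $\kappa$ is assigned to $q^*_1$ and $\kappa\in T_1^*$. Next I would pin down $r_1$ exactly. On the one hand the triangle inequality gives $d(\kappa,q^*_1)\ge d(\kappa,q^*)-\delta_1=\epsilon^*-\delta_1$, so $r_1\ge\epsilon^*-\delta_1$. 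On the other hand, the two arms of a longest path of $T_1^*$ leave $q^*_1$ along distinct incident edges, and at most one of them can point toward $q^*$; hence some arm runs strictly away from $q^*$, its endpoint $a$ has $q^*_1$ on the $q^*$--$a$ path, and $d(q^*,a)=\delta_1+r_1\le\epsilon^*$, giving $r_1\le\epsilon^*-\delta_1$. Therefore $r_1=\epsilon^*-\delta_1$ and $d(\kappa,q^*_1)=r_1$, so $\kappa$ is a longest-path endpoint of $T_1^*$, which settles this inclusion. As a byproduct, the endpoint $a$ above satisfies $d(q^*,a)=\epsilon^*$, so it is itself a key vertex.

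The reverse inclusion ``longest-path endpoint $\Rightarrow$ key vertex of $q^*$ in $T_1$'' is where I expect the real difficulty. For a longest-path endpoint $a$ whose arm leaves $q^*_1$ in the direction away from $q^*$, the identity $r_1=\epsilon^*-\delta_1$ already forces $d(q^*,a)=\delta_1+r_1=\epsilon^*$, so such an $a$ is key and lies in $T_1$. The hard case is an endpoint $a'$ whose arm bends back toward $q^*$: if this arm peels off the $q^*$--$q^*_1$ path at depth $\mu<\delta_1$, then $d(q^*,a')=r_1-\delta_1+2\mu<\epsilon^*$, so $a'$ would fail to be key. The crux is therefore to rule out such a ``sideways'' endpoint, i.e.\ to show that every longest-path arm of $T_1^*$ heading toward $q^*$ actually passes through $q^*_1$ with $\mu=\delta_1$. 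Here I would lean on the standing general-case hypotheses---that exactly two hanging subtrees of $q^*$ carry key vertices and that no hanging subtree contains a pair both of whose vertices are key---together with the fact that the optimal assignment of the relevant pairs is forced, in order to preclude a far endpoint branching off strictly between $q^*$ and $q^*_1$. This is the step I would spend the most effort on, since the remaining pieces are routine once the radius identity $r_1=\epsilon^*-\delta_1$ is in hand; I would also be prepared for this case to require using these structural hypotheses in an essential way or refining the statement.
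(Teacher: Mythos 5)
Your reduction of the observation to the set identity $\{v\in T_1^*: d(v,q_1^*)=r_1\}=\{\text{key vertices of }q^*\text{ in }T_1\}$ rests on a misreading, and this is where the proposal breaks: ``each longest path of $T_1^*$ ends with a key vertex'' means that \emph{at least one} endpoint of every longest path is a key vertex (this is what the paper's case analysis establishes), not that every longest-path endpoint is key. The stronger set identity you set out to prove is false, so the ``crux'' you defer---ruling out a sideways endpoint $a'$ with branch depth $\mu<\delta_1$---cannot be filled in, no matter how the standing general-case hypotheses are deployed. Concretely: take vertices $A,p,p',B,C,D$ with edges $A$--$p$ of length $8$, $p$--$p'$ of length $3$, $p'$--$B$ of length $9$, $p$--$C$ of length $7$, $p'$--$D$ of length $8$, and pairs $(A,B),(C,D),(p,p')$. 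The diameter is the $A$--$B$ path of length $20$, so $\epsilon^*=10$ and the key vertices are exactly $A$ and $B$, one per hanging subtree of $q^*$, with neither degenerate condition holding. The optimal split puts $\{A,C,p\}$ on one center and $\{B,D,p'\}$ on the other, giving $\lambda^*=8.5<\epsilon^*$; then $T_1^*$ is the subtree spanned by $\{A,C,p\}$, whose unique longest path $A$--$C$ (length $15$, $r_1=7.5$, $\delta_1=2.5$) has the non-key endpoint $C$ with $d(q^*,C)=9<\epsilon^*$. So a longest-path endpoint at distance exactly $r_1$ from $q_1^*$ need not be key, and your planned second inclusion fails as stated.

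The irony is that your first half already contains a complete proof of the observation under its correct reading. Your radius identity $r_1=\epsilon^*-\delta_1$ (key-vertex lower bound via the triangle inequality, away-arm upper bound via $d(q^*,a)=\delta_1+r_1\le\epsilon^*$) gives both directions at once: every key vertex of $q^*$ in $T_1$ is assigned to $q_1^*$ and sits at distance exactly $r_1$ from the common midpoint $q_1^*$, hence is a longest-path endpoint; and since the two arms of any longest path leave $q_1^*$ in distinct directions, at least one arm avoids the $q^*$-direction, and its endpoint $a$ satisfies $d(q^*,a)=\delta_1+r_1=\epsilon^*$, i.e., $a$ is a key vertex, lying in $T_1$ because the $q^*$--$a$ path passes through $q_1^*\in T_1$---this is exactly your ``byproduct,'' which you did not recognize as finishing part one. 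Had you drawn this conclusion, your argument would in fact be a genuinely different and arguably cleaner route than the paper's, which instead roots $T$ at $q^*$, takes the lowest common ancestor $u'$ of the key vertices in $T_1$, and does a case analysis comparing the diameter of $T_1^*$ with the inter-key-vertex distance $\delta$. As submitted, however, the proposal is incomplete, and its declared remaining step is unprovable rather than merely hard.
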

\begin{proof}
    Below we focus on the proof for $T_1^*$ since the proof for $T^*_2$ is similar. 
    Due to $\lambda^*<\epsilon^*$, all key vertices of $q^*$ in $T_1$ must 
    be assigned to $q^*_1\in T_1$ so they are leaves of $T^*_1$. 
    
    Assume $q^*$ is the root of $T$. Let $u'$ be the lowest common ancestor 
    of all key vertices of $q^*$ in $T_1$. Note that each key vertex of $q^*$ in $T_1$ is 
    of the same distance to $u'$ (since they are the furthest vertices of $q^*$), 
    and $u'$ is in $T^*_1$. 
    Clearly, the longest path of $T^*_1$ is at least as long as the path(s) between $q^*$'s 
    key vertices in $T_1$. Denote by $\delta$ the path length between $q^*$'s key vertices in $T_1$. 
    Note that $\delta = 0$ iff. $q^*$ has only one key vertex in $T_1$. 
    
    Suppose the longest path of $T^*_1$ is of length equal to $\delta$, which means that $q^*$ has 
    multiple key vertices in $T_1$ (or $T_{u'}$). So, $u'$ must be the common midpoint of all longest paths of $T^*_1$. Hence, each key vertex of $q^*$ in $T_1$ is an endpoint of a longest path of $T_1^*$. 
    In addition, each key vertex of $q^*$ in $T_{u'}$ is further to $u'$ than any vertex of $T_{u'}$ that is not a key vertex of $q^*$; the path of 
    any two vertices in $T/T_{u'}$ does not contain $u'$. So, each longest path of $T_1^*$ must end with a key vertex of $q^*$ in $T_{1}$ ($T_{u'}$). Thus, the statement is true. 
    
    Suppose the longest path of $T^*_1$ is of length larger than $\delta$. This implies that each longest path of $T^*_1$ must end with a vertex in $T/T_{u'}$. 
    Let $v$ be any vertex of $T^*_1$ in $T/T_{u'}$. 
    We claim that each key vertex of $q^*$ in $T_{u'}$ (or $T_1$) is $v$'s furthest vertex in $T_1^*$. 
    If $v$ is $T/T_1$, then each key vertex of $q^*$ in $T_{u'}$ is $v$'s furthest vertex in $T$ and thereby, in $T^*_1$. 
    Otherwise, $v$ is $T_1/T_{u'}$. If each key vertex of $q^*$ in $T_{u'}$ is not the furthest vertex 
    of $v$ in $T^*_1$, then $v$'s furthest vertex in $T^*_1$ is not in $T_{u'}$. If so, the path 
    between $v$'s furthest vertex and each key vertex of $q^*$ in $T_1$ 
    is longer than $v$'s path to its furthest vertex. Hence, the claim is true. This implies that the statement is true. 

    It should be mentioned that for either of the above cases, 
    $q^*_1$ must be on the path between $u'$ and $q^*$ since the key vertices of 
    $q^*$ in $T_1$ are furthest to $q^*$ than any other vertex of $T^*_1$. 
    
    For $T^*_2$, the proof is similar so we omit the details. 
    Thus, the observation holds. \qed
\end{proof}

Let $\alpha$ (resp., $\beta$) be any key vertex of $q^*$ in $T_1$ (resp., $T_2$). 
Observation~\ref{obs:unweightedcenter} and Observation~\ref{obs:keyvertex} 
imply the following lemma. 

\begin{lemma}\label{lem:longestpathpartition}
There is an optimal solution where the assignment for each pair $P_i$ is 
determined by $\phi_{i}(\alpha, \beta)$ so that $q^*_1$ (resp., $q^*_2$) is the 
center of all vertices assigned to $\alpha$ (resp., $\beta$). 
\end{lemma}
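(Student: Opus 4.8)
The plan is to prove the lemma by exhibiting an optimal solution whose red/blue partition coincides with the partition induced by the $\phi_i(\alpha,\beta)$-minimizing assignment; once the two partitions agree, the optimal centers $q^*_1,q^*_2$ are by definition the one-centers of the $\alpha$-group and the $\beta$-group, which is exactly what the lemma asserts. So I would fix any optimal solution with color classes $R$ (the vertices assigned to $q^*_1\in T_1$) and $B$ (those assigned to $q^*_2\in T_2$), with one-center radii $\lambda_R,\lambda_B\le\lambda^*$ satisfying $\max\{\lambda_R,\lambda_B\}=\lambda^*$. By Observation~\ref{obs:keyvertex} the chosen key leaves satisfy $\alpha\in R$ and $\beta\in B$, and $\alpha$ (resp.\ $\beta$) is an endpoint of a longest path of $T_1^*$ (resp.\ $T_2^*$); by Observation~\ref{obs:unweightedcenter} this is precisely what pins each radius to half the length of the corresponding longest path.

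The core step is a per-pair comparison. For a pair $P_i=(v_i,u_i)$ split by the optimum as $v_i\in R$, $u_i\in B$, I would show that the optimal split is exactly the one realizing $\phi_i(\alpha,\beta)$, i.e.\ $\max\{d(v_i,\alpha),d(u_i,\beta)\}\le\max\{d(u_i,\alpha),d(v_i,\beta)\}$. It suffices to establish the two termwise inequalities $d(v_i,\alpha)\le d(v_i,\beta)$ and $d(u_i,\beta)\le d(u_i,\alpha)$, after which the desired inequality between the maxima follows immediately. The clean mechanism is the triangle inequality routed through the centroid $q^*$: since $\alpha,\beta$ are key leaves lying in the distinct hanging subtrees $T_1,T_2$ of $q^*$ with $d(\alpha,q^*)=d(\beta,q^*)=\epsilon^*$, any vertex $x$ not strictly inside $T_2$ has its path to $\beta$ pass through $q^*$, giving $d(x,\beta)=d(x,q^*)+\epsilon^*\ge d(x,\alpha)$, and symmetrically for a vertex not strictly inside $T_1$. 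Hence whenever $v_i\notin T_2$ and $u_i\notin T_1$ — in particular when $v_i\in T_1$ and $u_i\in T_2$, or when either vertex lies in a neutral subtree of $T\setminus(T_1\cup T_2)$ where the two distances coincide — both inequalities hold and the $\phi$-split agrees with the optimum.

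The hard part, which I would treat last, is the remaining \emph{crossed} pairs: a red vertex lying strictly inside $T_2$, or a blue vertex strictly inside $T_1$. For these the termwise inequalities can reverse, so the $\phi$-rule may prefer the opposite split from the fixed optimum and agreement cannot simply be read off. My plan is to argue that an optimal solution can be chosen with no crossed vertices, via an exchange argument: when a pair straddles the centroid in the wrong orientation (its red vertex inside $T_2$ and its blue vertex inside $T_1$), recoloring both vertices moves each to the side of its nearer center, and by Observation~\ref{obs:keyvertex} this leaves the diameter-defining key vertices of each class untouched, so neither radius grows. I would then rule out the residual degenerate configurations — such as a pair with both vertices inside one subtree — using the relation $d(x,q^*_1)=d(x,q^*)+(\epsilon^*-\lambda_R)$ for $x\in T_2$, which forces any such crossed vertex to sit within $2\lambda_R-\epsilon^*$ of $q^*$ and, combined with $\lambda^*<\epsilon^*$, contradicts $\alpha$ or $\beta$ being a farthest leaf. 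Showing that this exchange terminates at a crossing-free optimum is where the real work lies; once it is done, the optimal partition equals the $\phi_i(\alpha,\beta)$-partition on every pair, and the lemma follows at once from Observation~\ref{obs:unweightedcenter}.
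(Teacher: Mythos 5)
Your opening step is sound: the routing through $q^*$ does give $d(x,\beta)=d(x,q^*)+\epsilon^*\ge d(x,\alpha)$ for every $x$ not strictly inside $T_2$ (and symmetrically), so for any pair whose red vertex avoids $T_2$ and whose blue vertex avoids $T_1$, the optimum's split realizes $\phi_i(\alpha,\beta)$. But the plan breaks at exactly the crossed pairs you deferred, for two reasons. First, a crossing-free optimum need not exist: a pair may have \emph{both} vertices strictly inside $T_2$ (nothing forbids non-key pairs sitting near $q^*$ when $\lambda^*<\epsilon^*$), and since the two vertices of a pair must go to different centers, one of them is assigned to $q^*_1\in T_1$ and is crossed in \emph{every} feasible solution; your recoloring exchange can only flip which vertex of such a pair is crossed, never eliminate the crossing, so the terminal state ``the optimal partition equals the $\phi_i(\alpha,\beta)$-partition on every pair via a crossing-free optimum'' is unreachable. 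Second, your claimed rule-out of these configurations is a non sequitur: from $d(x,q^*_1)\le\lambda_R$ and $d(x,q^*_1)=d(x,q^*)+(\epsilon^*-\lambda_R)$ you correctly deduce $d(x,q^*)\le 2\lambda_R-\epsilon^*$, but this is satisfiable whenever $\lambda_R>\epsilon^*/2$ and contradicts nothing about $\alpha$ or $\beta$ being farthest leaves.

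The exchange step is also unjustified as stated. Observation~\ref{obs:keyvertex} says each longest path of a color class ends at a key vertex; it does \emph{not} say the class radius is controlled by key vertices alone, and recoloring a vertex into a class can create a longer path whose \emph{other} endpoint is the newly added vertex, so ``neither radius grows'' needs a quantitative bound you never supply. The missing fact --- which both repairs the exchange and proves the lemma outright --- is that $\alpha$ and $\beta$ are endpoints of a diametral path of $T$, whence $d(y,z)\le\max\{d(y,\alpha),d(z,\alpha)\}$ for all vertices $y,z$ (and likewise with $\beta$). With it, Observations~\ref{obs:unweightedcenter} and~\ref{obs:keyvertex} give $2\lambda^*=\max\{\max_{v\in R}d(v,\alpha),\,\max_{u\in B}d(u,\beta)\}$ for any fixed optimum $(R,B)$; the per-pair greedy partition by $\phi_i(\alpha,\beta)$ has $(\alpha,\beta)$-value at most this, and the domination inequality converts that value back into class diameters at most $2\lambda^*$, i.e., radii at most $\lambda^*$. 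This is in essence the paper's (very terse) argument: it never edits the optimal solution or tracks crossings, but recasts the objective as a maximum of per-pair distances to the fixed pair $(\alpha,\beta)$, for which independent per-pair minimization is trivially optimal. Your centroid-routed inequalities are a correct fragment of this, but without the diametral-endpoint domination fact, the crossed pairs leave a genuine gap that your exchange route cannot close.
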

\begin{proof}
    By Observation~\ref{obs:unweightedcenter} and Observation~\ref{obs:keyvertex}, 
    our problem indeed asks for a partition on each pair $(v_i,u_i)$ 
    so as to minimize the maximum length of the longest paths respectively 
    to any key vertex of $q^*$ in $T_1$ and any key vertex of $q^*$ in $T_2$. Thus, the lemma holds. \qed
\end{proof}

Now we are ready to present our algorithm: First, traverse $T$ in the post-order while iteratively removing its leaves that are not in any pair of $\calP$, which can be done in $O(n)$ time. Second, compute $q^*$ and $\epsilon^*$ in $O(n)$ time~\cite{ref:HalfinOn74}. 
Then, apply Lemma~\ref{lem:binarysearch} to determine whether $\lambda^*=\epsilon^*$. 
If yes, then $q^*_1$ and $q^*_2$ are at $q^*$ and for each pair $P_i\in\calP$, 
we assign any one of its vertices to $q^*_1$ and the other to $q^*_2$. 
Otherwise, we obtain exactly two hanging subtrees $T_1$ and $T_2$ of $q^*$, which contain key vertices of $q^*$. 

Pick a key vertex of $q^*$ from each of the two subtrees; 
denote them by $\alpha$ and $\beta$. 
Proceed to determine in $O(1)$ time the optimal assignment of vertices 
in each pair by letting $q_1 = \alpha$ and $q_2 = \beta$. 
Last, we find in linear time the longest path between $\alpha$ 
and a vertex assigned to $q_1$ as well as the longest path between 
$\beta$ and a vertex assigned to $q_2$. 
$\lambda^*$ equals the half of the maximum length of the two longest paths, and $q^*_1$ and $q^*_2$ are respectively 
at their midpoints. Thus, we have the following theorem. 

\begin{theorem}\label{the:unweightedtree}
    The unweighted bichromatic two-center problem on trees can be solved in $O(n)$ time. 
\end{theorem}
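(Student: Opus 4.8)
The plan is to collapse the two-center problem into two independent one-center computations by fixing a pair of representative key vertices and then colouring every input pair greedily. The backbone is Lemma~\ref{lem:longestpathpartition}: I will invoke it to guarantee that an optimal solution is recovered by deciding, for each pair $P_i$, the assignment realizing $\phi_i(\alpha,\beta)$, where $\alpha$ and $\beta$ are fixed key vertices of $q^*$ lying in its two key-bearing hanging subtrees.

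First I would run the preprocessing: traverse $T$ in post-order and iteratively delete leaves appearing in no pair of $\calP$; since each vertex is deleted at most once this costs $O(n)$ and, as already argued, preserves the optimum. Next I compute the center $q^*$ and the radius $\epsilon^*$ in $O(n)$ time via the characterization of Halfin~\cite{ref:HalfinOn74}, so that $q^*$ is the midpoint of the longest path(s) and $\epsilon^*$ is half the longest-path length. I then apply Lemma~\ref{lem:binarysearch} at $q^*$ to test in linear time whether $\lambda^* = \epsilon^*$: if so, I place $q_1^* = q_2^* = q^*$ and assign each pair arbitrarily; otherwise the lemma isolates exactly the two hanging subtrees $T_1, T_2$ of $q^*$ carrying key vertices, and I select one key vertex $\alpha \in T_1$ and one $\beta \in T_2$.

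With $\alpha$ and $\beta$ fixed, I would assign each pair $P_i = (v_i, u_i)$ by comparing the two candidate maxima defining $\phi_i(\alpha,\beta)$; using the constant-time distance queries from the range-minima structure of Section~\ref{sec:tree}, each decision costs $O(1)$, for $O(n)$ in total. Having fixed the colouring, one further linear traversal finds the farthest vertex assigned to $q_1^*$ as measured from $\alpha$, and symmetrically the farthest vertex assigned to $q_2^*$ from $\beta$. By Observation~\ref{obs:unweightedcenter} and Observation~\ref{obs:keyvertex}, these two longest paths determine the centers at their midpoints, and $\lambda^*$ equals half the larger of the two path lengths.

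The main obstacle is justifying that the single invocation of Lemma~\ref{lem:binarysearch} at $q^*$ correctly separates the two regimes: its internal case analysis, driven by how many hanging subtrees of $q^*$ attain the maximum value $\tau_1 = \epsilon^*$ and by whether one such subtree already contains a pair of key vertices, must be shown to coincide exactly with the dichotomy $\lambda^* = \epsilon^*$ versus $\lambda^* < \epsilon^*$, and in the latter case to return precisely the two subtrees $T_1, T_2$ into which the centers split. Once this separation and the soundness of choosing arbitrary key representatives (any key vertex of $q^*$ in $T_1$ is an endpoint of a longest path of $T_1^*$, by Observation~\ref{obs:keyvertex}) are in hand, verifying that every listed step runs in linear time is routine.
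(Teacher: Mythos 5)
Your proposal is correct and takes essentially the same route as the paper: the same post-order leaf-pruning preprocessing, computing $q^*$ and $\epsilon^*$ via Halfin's characterization, one application of Lemma~\ref{lem:binarysearch} at $q^*$ to separate the case $\lambda^*=\epsilon^*$ from $\lambda^*<\epsilon^*$, and then assigning each pair by $\phi_i(\alpha,\beta)$ via Lemma~\ref{lem:longestpathpartition} followed by two linear farthest-vertex computations whose midpoints give $q_1^*$ and $q_2^*$. The ``main obstacle'' you flag is already discharged in the paper by the two explicit conditions (multiple key-bearing hanging subtrees, or a pair of key vertices inside one hanging subtree) extracted from the case analysis in the proof of Lemma~\ref{lem:binarysearch}, so no additional argument is required.
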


\section{Conclusion} 
The weighted two-center problem on trees can be solved in linear time by a prune-and-search 
algorithm~\cite{ref:Ben-MosheAn06}. However, our result for the bichromatic version 
on a vertex-weighted tree is $O(n\log n)$. So, can we adapt the prune-and-search mechanism 
or extend our linear-time approach for the unweighted case to solve our weighted case 
in linear time? 

Recall that our Lemma~\ref{lem:binarysearch} finds one or two (disjoint) subtrees 
that contain centers. To apply the prune-and-search mechanism, 
we need to determine the optimal assignment for some pairs whose both vertices are 
not in the subtree(s) containing centers in order to find and prune vertices 
assigned to $q^*_1$ that are not relevant to $q^*_1$. 
Let $(v_i,u_i)$ be such a pair. Even though $v_i$ is known to have a larger 
weighted distance at either center than $u_i$, without knowing $q^*_1$ and $q^*_2$, 
it is not clear which assignment for two vertices leads a smaller objective value. 

In addition, since the key vertices of the weighted center of the tree 
might not be relevant to $\lambda^*$, we cannot apply Lemma~\ref{lem:longestpathpartition} 
to determine the optimal assignment for each pair. 
Hence, it seems not likely to extend the prune-and-search scheme or our linear-time framework 
for the unweighted case to solve the weighted version. An open question is whether 
one can solve the bichromatic two-center problem on a vertex-weighted tree in linear time.


\begin{thebibliography}{10}
\providecommand{\url}[1]{\texttt{#1}}
\providecommand{\urlprefix}{URL }
\providecommand{\doi}[1]{https://doi.org/#1}

\bibitem{ref:ArkinBi15}
Arkin, E.M., Díaz-Báñez, J.M., Hurtado, F., Kumar, P., Mitchell, J.S., Palop, B., Pérez-Lantero, P., Saumell, M., Silveira, R.I.: Bichromatic $2$-center of pairs of points. Computational Geometry: Theory and Applications  \textbf{48}(2),  94--107 (2015). \doi{10.1016/j.comgeo.2014.08.004}

\bibitem{ref:BanikTh16}
Banik, A., Bhattacharya, B., Das, S., Kameda, T., Song, Z.: The $p$-center problem in tree networks revisited. In: Proc. of the 15th Scandinavian Symposium and Workshops on Algorithm Theory. pp. 6:1--15 (2016). \doi{10.4230/LIPIcs.SWAT.2016.6}

\bibitem{ref:Ben-MosheAn06}
Ben-Moshe, B., Bhattacharya, B., Shi, Q.: An optimal algorithm for the continuous/discrete weighted $2$-center problem in trees. In: Proc. of the $7$th Latin American conference on Theoretical Informatics. pp. 166--177 (2006). \doi{10.1007/11682462_19}

\bibitem{ref:BenMosheEf07}
Ben-Moshe, B., Bhattacharya, B., Shi, Q., Tamar, A.: Efficient algorithms for center problems in cactus networks. Theoretical Computer Science  \textbf{378},  237--252 (2007). \doi{10.1016/j.tcs.2007.02.033}

\bibitem{ref:BenderTh00}
Bender, M., Farach-Colton, M.: The {LCA} problem revisited. In: Proc. of the 4th Latin American Symposium on Theoretical Informatics. pp. 88--94 (2000). \doi{10.1007/10719839_9}

\bibitem{ref:BenderTh04}
Bender, M., Farach-Colton, M.: The level ancestor problem simplied. Theoretical Computer Science  \textbf{321},  5--12 (2004). \doi{10.1016/j.tcs.2003.05.002}

\bibitem{ref:BentleyAl79}
Bentley, J., Ottmann, T.: Algorithms for reporting and counting geometric intersections. IEEE Transactions on Computers  \textbf{28}(9),  643--647 (1979). \doi{10.1109/TC.1979.1675432}

\bibitem{ref:BhattacharyaIm14}
Bhattacharya, B., Shi, Q.: Improved algorithms to network $p$-center location problems. Computational Geometry: Theory and Applications  \textbf{47}(2),  307--315 (2014). \doi{10.1016/j.comgeo.2009.07.003}

\bibitem{ref:CarmiSt2023}
Carmi, P., Katz, M.J., Morin, P.: Stabbing pairwise intersecting disks by four points. Discrete \& Computational Geometry  \textbf{70}(4),  1751--1784 (2023). \doi{10.1007/s00454-023-00567-0}

\bibitem{ref:ChenAn13}
Chen, D., Wang, H.: A note on searching line arrangements and applications. Information Processing Letters  \textbf{113},  518--521 (2013). \doi{10.1016/j.ipl.2013.04.011}

\bibitem{ref:DingSo11}
Ding, H., Xu, J.: Solving the chromatic cone clustering problem via minimum spanning sphere. In: Proc. of the $38$th International Colloquium on Automata, Languages, and Programming (ICALP). pp. 773–--784 (2011). \doi{10.1007/978-3-642-22006-7_65}

\bibitem{ref:HalfinOn74}
Halfin, S.: On finding the absolute and vertex centers of a tree with distances. Transportation Science  \textbf{8}(1),  75--77 (1974). \doi{10.1287/trsc.8.1.75}

\bibitem{ref:HalimCo13}
Halim, S., Halim, F.: Competitive Programming 3: The New Lower Bound of Programming Contests. Lulu.com (2013)

\bibitem{ref:IbtehazMu21}
Ibtehaz, N., Kaykobad, M., {Sohel Rahman}, M.: Multidimensional segment trees can do range updates in poly-logarithmic time. Theoretical Computer Science  \textbf{854},  30--43 (2021). \doi{10.1016/j.tcs.2020.11.034}

\bibitem{ref:KarivAnC79}
Kariv, O., Hakimi, S.: An algorithmic approach to network location problems. {I: The} $p$-centers. SIAM Journal on Applied Mathematics  \textbf{37}(3),  513--538 (1979). \doi{10.1137/0137040}

\bibitem{ref:LofflerLa10}
L{\"o}ffler, M., Kreveld, M.: Largest bounding box, smallest diameter, and related problems on imprecise points. Computational Geometry  \textbf{43}(4),  419--433 (2010). \doi{10.1016/j.comgeo.2009.03.007}

\bibitem{ref:MegiddoLi83}
Megiddo, N.: Linear-time algorithms for linear programming in {$R^3$} and related problems. SIAM Journal on Computing  \textbf{12}(4),  759--776 (1983). \doi{10.1109/SFCS.1982.24}

\bibitem{ref:NussbaumRe97}
Nussbaum, D.: Rectilinear $p$-piercing problems. In: Proc. of the 1997 International Symposium on Symbolic and Algebraic Computation. pp. 316--323 (1997). \doi{10.1145/258726.258828}

\bibitem{ref:PreparataCo85}
Preparata, F.P., Shamos, M.I.: Computational Geometry: An Introduction. Springer-Verlag, New york (1985). \doi{10.1007/978-1-4612-1098-6}

\bibitem{ref:SharirRe96}
Sharir, M., Welzl, E.: Rectilinear and polygonal p-piercing and p-center problems. In: Proc. of the Twelfth Annual Symposium on Computational Geometry. pp. 122--132 (1996). \doi{10.1145/237218.237255}

\bibitem{ref:WangIm22}
Wang, H., Xue, J.: Improved algorithms for the bichromatic two-center problem for pairs of points. Computational Geometry: Theory and Applications  \textbf{98},  117--140 (2022). \doi{10.1016/j.comgeo.2021.101806}

\bibitem{ref:WangAn21}
Wang, H., Zhang, J.: An $o(n\log n)$-time algorithm for the $k$-center problem in trees. SIAM Journal on Computing (SICOMP)  \textbf{50}(2),  602--635 (2021). \doi{10.1137/18M1196522}

\end{thebibliography}

\end{document}